\newtheorem{proposition}{proposition}[section]
\def\eqref#1{equation~\ref{#1}}
\def\1{\bm{1}}
\DeclareMathAlphabet{\mathsfit}{\encodingdefault}{\sfdefault}{m}{sl}
\SetMathAlphabet{\mathsfit}{bold}{\encodingdefault}{\sfdefault}{bx}{n}
\title{SAQ: Stabilizer-Aware Quantum Error Correction Decoder}
\author{David Zenati, Eliya Nachmani  \\
School of Electrical and Computer Engineering\\
Ben-Gurion University of the Negev\\
\texttt{znatid@post.bgu.ac.il, eliyanac@bgu.ac.il} \\
}
\begin{document}

\maketitle

\begin{abstract}
Quantum Error Correction (QEC) decoding faces a fundamental accuracy-efficiency tradeoff. Classical methods like Minimum Weight Perfect Matching (MWPM) exhibit variable performance across noise models and suffer from polynomial complexity, while tensor network decoders achieve high accuracy but at prohibitively high computational cost. Recent neural decoders reduce complexity but lack the accuracy needed to compete with computationally expensive classical methods. We introduce SAQ-Decoder, a unified framework combining transformer-based learning with constraint aware post-processing that achieves both near Maximum Likelihood (ML) accuracy and linear computational scalability with respect to the syndrome size. Our approach combines a dual-stream transformer architecture that processes syndromes and logical information with asymmetric attention patterns, and a novel differentiable logical loss that directly optimizes Logical Error Rates (LER) through smooth approximations over finite fields. 
SAQ-Decoder achieves near-optimal performance, with error thresholds of 10.99\% (independent noise) and 18.6\% (depolarizing noise) on toric codes that approach the ML bounds of 11.0\% and 18.9\% while outperforming existing neural and classical baselines in accuracy, complexity, and parameter efficiency. Our findings establish that learned decoders can simultaneously achieve competitive decoding accuracy and computational efficiency, addressing key requirements for practical fault-tolerant quantum computing systems.
\end{abstract}

\section{Introduction}
Since Feynman's 1982 vision of quantum computation \citet{feynman2018simulating}, significant progress has demonstrated that quantum computers can leverage quantum mechanical principles to achieve fundamental computational advantages over classical methods \citep{ steane1998quantum, ladd2010quantum, preskill2012quantum, demarti2024decoding}. Landmark quantum algorithms have demonstrated computational advantages, including exponential speedup for factoring \citep{shor1994algorithms} and quadratic search improvement \citep{grover1996fast}. Recent experimental demonstrations of quantum supremacy have further validated quantum computing's potential across diverse domains \citep{arute2019quantum, zhong2020quantum, wu2021strong, huang2022quantum, madsen2022quantum, bao2023very, bluvstein2024logical, aghaee2025scaling}. These advances promise to revolutionize cryptography \citep{ekert1991quantum,bennett2014quantum}, optimization \citep{kadowaki1998quantum,bharti2022noisy}, materials science \citep{lloyd1996universal}, and machine learning \citep{huang2022quantum, cerezo2022challenges}.

Yet, for practical quantum computation to become a reality, errors on the physical level must be corrected with high confidence. Despite recent advances, quantum noise remains a major obstacle to practical quantum computing \citep{google2023suppressing}. These errors arise through numerous mechanisms: quantum gates cause unwanted errors due to imprecise implementation \citep{fowler2012surface}, while additional errors stem from imperfections in the equipment \citep{preskill2018quantum}, interaction with
the surrounding environment  \citep{burnett2019decoherence, etxezarreta2021time}, or measuring quantum systems \citep{fowler2012surface}. While fault-tolerant quantum computation can theoretically be achieved through redundancy by combining multiple physical qubits into one logical qubit \citep{shor1995scheme, nielsen2010quantum}, this approach creates a critical computational bottleneck: QEC requires real-time decoding algorithms that must process syndrome measurements and determine corrections within microsecond timescales while maintaining near-optimal accuracy \citep{terhal2015quantum, higgott2022pymatching}. Current decoding methods face a fundamental trade-off between computational efficiency and error-correction performance, methods like MWPM \citep{fowler2013minimum}, Belief Propagation with Ordered Statistics Decoding (BP-OSD) \citep{roffe2020decoding} and tensor network decoder \citep{bravyi2014efficient} scaling prohibitively with code distance while faster heuristics sacrifice the accuracy essential for fault-tolerant operation \citep{demarti2024decoding}.
The field of QEC has advanced significantly, with several families of QEC codes proposed, including topological codes \citep{ kitaev2003fault, bombin2006topological,fowler2012surface, chamberland2020topological}, Quantum Low-Density Parity Check (QLDPC) codes \citep{mackay2004sparse, panteleev2021quantum, breuckmann2021quantum}, and quantum turbo codes \citep{poulin2009quantum}. Recently, there has been significant growth in machine learning techniques applied to quantum decoding \citep{wang2024artificial, klusch2024quantum}. However, existing neural decoders typically fail to achieve near-optimal error thresholds, creating a gap between the theoretical potential of learned approaches and the performance requirements of fault-tolerant quantum computing.
We address this challenge by introducing a unified framework that combines transformer-based neural decoding with specialized architectural innovations. Our approach leverages neural networks to learn syndrome-to-error map-
pings while employing dual-stream processing and logical-centric loss design to directly optimize
logical error suppression. To achieve this, our framework introduces several key innovations:
\begin{itemize}
    \item A novel dual-stream transformer architecture \citep{vaswani2017attention} that simultaneously processes syndrome and logical information streams with specialized attention mechanisms, featuring global tokens \citep{zaheer2020big} and structured masking patterns that capture the geometric constraints and local correlations inherent in stabilizer codes.
    \item A novel logical-centric multi objective loss, including differentiable minimum entropy loss that directly optimizes LER through smooth approximations of discrete GF(2) constraints, enabling end-to-end training that circumvents the non-differentiability challenges in QEC.
    \item Constraint-Projected Nullspace Descent (CPND), a novel deterministic post processing algorithm that leverages transformer probabilities as reliability weights to construct recovery operators with exact syndrome consistency.
    \item Near-optimal error thresholds of 10.99\% and 18.6\% for toric codes under independent and depolarizing noise, approaching ML bounds 11.0\% and 18.9\%, with linear scalability in syndrome size and general applicability across stabilizer code families, contrasting favorably with polynomial-scaling classical methods.
\end{itemize}

Our results significantly outperform existing neural decoders like QEC Transformer (QECCT) \citep{choukroun2024deep} and classical methods like MWPM, while matching the performance of computationally expensive approaches across both toric and rotated surface codes.

The remainder of this paper is organized as follows.
Section \ref{sec:related-works} surveys related work in QEC. Section \ref{sec:background} provides essential background on the quantum decoding problem. Our unified framework is presented in Section \ref{sec:method}, where we detail the transformer architecture and dual-stream design and our loss formulation. Section \ref{sec:results} presents comprehensive experimental evaluation. Finally, Section \ref{sec:conclusion} summarizes our contributions and discusses implications for fault-tolerant quantum computing.

\section{Related works}\label{sec:related-works}
A broad suite of QEC codes has been devised to protect quantum information from decoherence, noise, and gate imperfections. Extracting the underlying logical state from these codes requires dedicated decoders that infer the likely errors from the measured syndromes and prescribe corrections \citep{dennis2002topological}. However, ML decoding for quantum codes is NP-hard \citep{kuo2020hardnesses}, prompting the adoption of approximate methods that trade optimality for computational tractability \citep{demarti2024decoding}.
Classical quantum decoding approaches include MWPM, which achieves near-optimal thresholds under independent noise but suffers from poor scaling even with practical approximations \citep{edmonds1965paths, fowler2012towards, meinerz2022scalable}; belief propagation, effective for sparse parity-check codes but impeded by quantum degeneracy \citep{pearl2022fusion, panteleev2021quantum, wang2024artificial}; union-find decoders that map syndromes to graph problems but achieve lower thresholds than MWPM \citep{delfosse2021almost}; and tensor-network decoders that attain the highest accuracy at steep computational cost \citep{bravyi2014efficient, google2023suppressing}. Despite their foundational role, these conventional approaches exhibit inherent limitations that impede practical deployment in large-scale, fault-tolerant quantum systems \citep{krenn2023artificial, demarti2024decoding}.
Machine learning has emerged as a compelling alternative, with various architectures that demonstrate accuracy and speed gains over classical baselines while allowing adaptation to device-specific, correlated noise processes that challenge traditional decoders \citep{wang2024artificial, demarti2024decoding, varsamopoulos2017decoding, varsamopoulos2019comparing, harper2020efficient, magesan2020effective, liu2019neural}. Specifically, these architectures are employed in reinforcement learning, \citep{colomer2020reinforcement, sweke2020reinforcement, fitzek2020deep, ccelikkanat2022piecewise, veeresh2024rl}, and supervised learning \citep{bishop2006pattern, goodfellow2016deep}, where models are trained on labeled datasets to map measured syndromes to recovery operations \citep{demarti2024decoding, wang2024artificial}. Early  approaches included feedforward networks \citep{torlai2017neural}, neural decoders learning error distributions \citep{krastanov2017deep}, and quantum autoencoders \citep{locher2023quantum}, demonstrating generalization while reducing complexity and adapting to noise. CNN-based decoders achieve strong performance on topological codes via spatial correlations \citep{maskara2019advantages, meinerz2022scalable}. 
More recently, transformer-based architectures have been explored, most notably the QECCT \citep{choukroun2024deep}, outperforming MWPM across topological codes. Another innovative AI-based decoder is AlphaQubit \citep{bausch2024learning}, which represents a major milestone in QEC decoding but employs a recurrent structure and processes analog measurement data, unlike our feed-forward architecture which utilizes discrete binary syndrome inputs.

\section{Background}\label{sec:background}
 A binary linear code $\mathcal{C} \subseteq GF(2)^n$
is defined as the nullspace of a parity-check matrix $\mathbf{H} \in GF(2)^{(n-k) \times n}$, where $n \in \mathbb{N}$ physical bits encode $k\in \mathbb{N}$ logical (message) bits. For an error vector $\mathbf{e} \in GF(2)^n$, the syndrome $\mathbf{s} = \mathbf{H}\mathbf{e}^{\mathsf T}$ serves as a sufficient statistic for ML decoding. The transition to QEC introduces fundamental complications absent in classical settings. Unlike classical bits that exist in definite states $\{0,1\}$, quantum information is encoded in
qubits—two-level quantum systems that exist in coherent superpositions:
\begin{equation}
\vert \psi \rangle = \alpha\vert0\rangle + \beta \vert 1\rangle, \quad
\text{where } \alpha,\beta \in \mathbb{C}, \; |\alpha|^{2} + |\beta|^{2} = 1
\label{eq:qubit_state}
\end{equation}

This quantum nature creates fundamental challenges: quantum errors form a continuous group, and the no-cloning theorem eliminates classical redundancy.
Fortunately, the Pauli channel provides a tractable error model. Any single-qubit error can be decomposed in the Pauli basis $\{I, X, Y, Z\}$: 
\begin{align}
    I\vert \psi \rangle = \alpha\vert0\rangle + \beta \vert 1\rangle;  \hspace{0.2em}
    X\vert \psi \rangle = \alpha\vert1\rangle + \beta \vert 0\rangle; \hspace{0.2em}
    Y\vert \psi \rangle = -i\alpha\vert1\rangle + i\beta \vert 0\rangle; \hspace{0.2em}
    Z\vert \psi \rangle = \alpha\vert0\rangle - \beta \vert 1\rangle
    \label{eq:pauli_ops}
\end{align}
A general single-qubit Pauli channel applies error $P \in \{I, X, Y, Z\}$ with probability $\phi_P$, where $\sum_{\phi} \phi_P = 1$. For $n$ qubits, errors are tensor products $E = P_1 \otimes \cdots \otimes P_n$, leading to $4^n$ possible error patterns. The exponential growth in error patterns ($4^n$ vs. $2^n$ classically) creates a rich combinatorial optimization problem well which suited to neural approaches.


\textbf{Stabilizer Formalism.} The stabilizer framework, \citep{gottesman1997stabilizer}, addresses QEC challenges by discretizing the error space while preserving quantum coherence. This formalism exploits the algebraic structure of the Pauli group to construct quantum codes syndrome extraction.
The Pauli group foundation. The $n$-qubit Pauli group captures all local quantum errors:
\begin{equation}
    \mathcal{P}_n \;=\; \bigl\{\, \omega\, P_1\otimes\cdots\otimes P_n \;:\;
\omega\in\{\pm 1,\pm i\},\; P_j\in\{I,X,Y,Z\} \quad\text{for }j=1,\ldots ,n\, \bigr\}
\end{equation}
The global phases $\omega$ leave syndrome measurements invariant and can be quotiented out.

A stabilizer group $\mathcal{S}$ forms an abelian subgroup of $\mathcal{P}_n$ with $-I \notin \mathcal{S}$. 
The abelian structure guarantees that all stabilizer elements commute, enabling simultaneous measurability. An $[[n,k,L_{\text{code}}]]$ stabilizer code with distance $L_{\text{code}}$ uses $m = n-k$ independent generators
$\{S_{i}\}_{i=1}^m$
whose joint $+1$ eigenspace defines the codespace:
\begin{equation}
\mathcal{C}_{\mathcal{S}}=\{ \vert \psi \rangle \in \mathcal{H}^{n}_{2} : S_i\vert \psi \rangle = \vert \psi \rangle, \quad\text{for }i=1,\ldots ,m \}
\end{equation}
For an error $E\in \mathcal{P}_n$, the syndrome $s(S_i,E)$ indicates whether stabilizer $S_i$ commutes (0) or anticommutes (1) with $E$.
The full syndrome vector $\mathbf{s}(E)=(s(S_1,E),\ldots,s(S_m,E))\in\{0,1\}^m$ provides a classical signature of the quantum error. Crucially, measuring these stabilizers is
non demolition, i.e., extracting error information without disturbing the encoded quantum state.
Quantum degeneracy occurs when multiple distinct errors produce identical syndromes because they differ by logical operators that commute with all stabilizers yet act nontrivially on the codespace.

Quantum degeneracy creates a prediction problem: given syndrome $\mathbf{s}$, determine which logical coset contains the true error. This presents formidable computational challenges with exponential syndrome spaces ($2^{O(L_{\text{code}}^2)}$ for surface codes), making neural approaches particularly attractive for learning optimal syndrome-to-coset mappings.
Surface codes possess inherent geometric structure ideal for neural learning, with local syndrome correlations and hierarchical error patterns that align perfectly with attention mechanisms capable of capturing both local and global correlations.

\section{SAQ Decoder}\label{sec:method}
We address QEC problem: given syndrome measurements, predict recovery operations that restore correct logical states. Due to degeneracy, multiple errors yield identical syndromes, requiring decoders that find logically equivalent recovery operations. To tackle this challenge, we propose a novel architecture consists of three sequential stages: (i) dual-stream representation construction, (ii) Syndrome-Logical Transformer Decoder (SLTD), (iii) the post-processing CPND stage and (iv) novel differentiable logical centric loss.

The dual-stream representation construction stage takes syndrome measurements as input and generates initial logical class estimates. These estimates, along with the original syndrome measurements, are then transformed into two token streams that serve as input to the SLTD. Using shared transformer weights, the SLTD processes these streams with distinct attention patterns tailored for QEC: syndrome tokens capture local correlations between neighboring stabilizer measurements, while logical tokens integrate information globally to determine error classes. The dual-stream architecture explicitly models the asymmetric information flow in quantum decoding, from local syndrome violations to global logical error determination. 
This dual-stream approach, shares high-level similarity with architectures used for classical codes, such as CrossMPT \citep{park2024crossmpt}.
The SLTD outputs logical class predictions and qubit flip predictions, which are trained using differentiable logical centric losses that approximate discrete GF(2), before being fed to the CPND stage. Subsequently, the CPND enforces syndrome consistency while preserving the transformer's learned representations, ensuring valid QEC.

\subsection{Stage 1: Dual-Stream Representation Construction.} Given a syndrome vector $\mathbf{s} \in \{-1,+1\}^m$, we first obtain an initial logical class estimate $\tilde{\ell} \in \mathbb{R}^{4^{k}}$ through a shallow MLP $b_{\phi} : \{-1,+1\}^m \rightarrow \mathbb{R}^{4^k}$, expressed as
\begin{equation}
    \tilde{\boldsymbol{\ell}} = b_{\phi}(\mathbf{s})
\end{equation}
where $4^{k}$ represents the total number of logical equivalence classes for 
$k$ logical qubits. 
The shallow MLP $b_{\phi}(\mathbf{s})$ provides informed priors about the most likely logical class, enabling the SLTD to refine these estimates rather than exploring the entire logical space from scratch. Such a mapping, where a syndrome input is processed by a shallow feed-forward network, appeared in earlier works, notably the FFN layer in \citep{meinerz2022scalable} and the initial noise estimator in QECCT \citep{choukroun2024deep}. Crucially, in our work, $b_{\phi}$ performs a global estimation of the logical class ($\tilde{\boldsymbol{\ell}}$), serving as a global prior input to the Logical Stream ($\mathbf{T}_L$). This contrasts with related approaches that utilize these initial layers primarily for generating local physical error probabilities or extracting an initial prediction of the recovery operator. The design choice aligns with the stabilizer formalism, where error correction decisions are made purely based on syndrome information, independent of the protected quantum information.
With both syndrome measurements $\mathbf{s}$ and initial logical class estimates $\tilde{\boldsymbol{\ell}}$
available, we construct two complementary token streams that capture different aspects of the quantum decoding problem:

    \textbf{Syndrome Stream Construction.} Each syndrome measurement $s_i\in\{-1,1\} \quad \text{for }i=1\ldots m$ is mapped to a learned embedding $\mathbf{t}_{i,S}^{[0]} = s_i \mathbf{w}_{i}^{S} \in \mathbb{R}^{d}$ using learnable positional embeddings $\mathbf{w}_{i}^{S} \in \mathbb{R}^{d}$ (where $d$ is the embedding dimension), which collectively form the learnable syndrome embedding matrix  $\mathbf{W}_{S} = [\mathbf{w}_{1}^{S}; \ldots; \mathbf{w}_{m}^{S}] \in \mathbb{R}^{m \times d}$. A learnable global token $\mathbf{g} \in \mathbb{R}^{d}$ is then prepended to enable cross-syndrome information exchange, forming the complete syndrome stream: $\mathbf{T}_{S}^{[0]} = [\mathbf{g}; \mathbf{t}_{1,S}^{[0]}; \ldots; \mathbf{t}_{m,S}^{[0]}] \in \mathbb{R}^{(m+1) \times d}$. The global token enables efficient information aggregation across distant syndrome regions—essential for handling correlated noise and large error clusters.

    \textbf{Logical Stream Construction.} Predicted logical class logits are embedded
    as $\mathbf{t}_{j,L}^{[0]} = \tilde{\ell}_j\mathbf{w}_{j}^{L} \in \mathbb{R}^{d} \quad\text{for }j=1\ldots 4^{k}$  using learnable class-specific representations $\mathbf{w}_{j}^{L} \in \mathbb{R}^{d}$. These embeddings form the matrix
    $\mathbf{W}_{L} = [\mathbf{w}_{1}^{L}; \ldots; \mathbf{w}_{4^{k}}^{L}] \in \mathbb{R}^{4^{k} \times d}$ and yield the logical token sequence $\mathbf{T}_{L}^{[0]} = [\mathbf{t}_{1,L}^{[0]}; \ldots;\mathbf{t}_{4^{k},L}^{[0]}] \in \mathbb{R}^{4^k \times d}$. The dual-stream design reflects that syndrome measurements encode local constraint violations while logical estimates capture global degeneracy patterns.

\begin{figure}[t]
  \centering
  \begin{subfigure}[t]{0.32\textwidth}
    \centering
    \includegraphics[width=\linewidth]{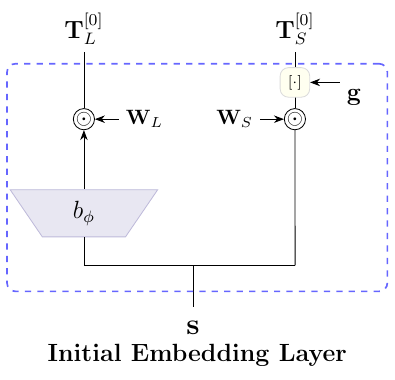}
    \label{fig:model_emb}
  \end{subfigure}\hfill%
  \begin{subfigure}[t]{0.32\textwidth}
    \centering
    \includegraphics[width=\linewidth]{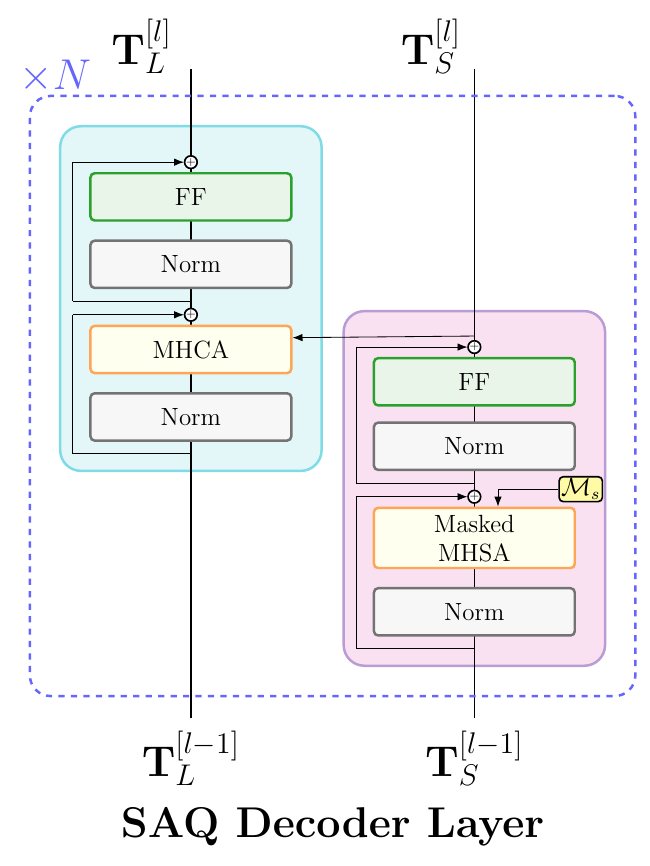}
    \label{fig:trans_layer}
  \end{subfigure}\hfill%
  \begin{subfigure}[t]{0.32\textwidth}
    \centering
    \includegraphics[width=\linewidth]{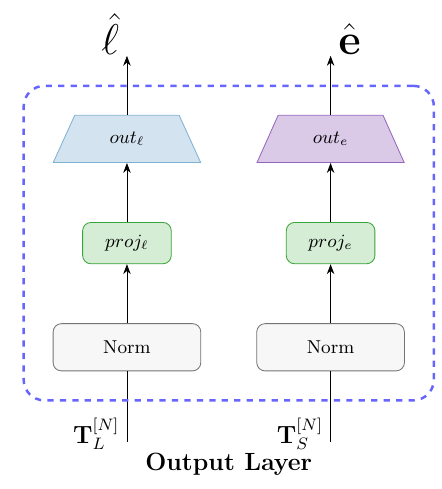}
    \label{fig:output_head}
  \end{subfigure}
  \caption{Architecture of SAQ-Decoder.}
  \label{fig:transitions}
\end{figure}


\subsection{Stage 2: Syndrome-Logical Transformer Decoder (SLTD)}

Having constructed dual token streams, $\mathbf{T}_{S}^{[0]}$ and $\mathbf{T}_{L}^{[0]}$, the SLTD refines these representations through $N$ transformer layers with asymmetric attention. We detail each computational step for an arbitrary layer $l$. Both token streams first undergo layer normalization \citep{ba2016layer} before attention computation.
The normalized tokens are processed using an asymmetric attention mechanism that captures the fundamental information flow in QEC: syndrome measurements reflect local physical constraints, while logical error determination requires global integration. This design restricts syndrome attention to topological neighborhoods while allowing logical tokens global access, enabling efficient local-global information processing. Syndrome self-attention processes syndrome tokens through:
\begin{equation}
    \mathbf{Q}_{S}^{[l-1]} = \widetilde{\mathbf{T}}_{S}^{[l-1]} \mathbf{W}_Q^{[l]}; \quad
\mathbf{K}_{S}^{[l-1]} = \widetilde{\mathbf{T}}_{S}^{[l-1]} \mathbf{W}_K^{[l]};  \quad
\mathbf{V}_{S}^{[l-1]} = \widetilde{\mathbf{T}}_{S}^{[l-1]} \mathbf{W}_V^{[l]} 
\end{equation}
\begin{equation}
    \mathbf{A}_{S}^{[l]}   = \text{Softmax}\left(d^{-1/2}\left(\mathbf{Q}_{S}^{[l-1]}{\mathbf{K}^{[l-1]}_{S}}^{T} + \mathcal{M}_{S}\right) \right)\mathbf{V}_{S}^{[l-1]}
\end{equation}
We introduce a novel syndrome attention mask $\mathcal{M}_{S}$ that enforces topological constraints:
\begin{equation}
\mathcal{M}_{S}[i,j] = \begin{cases}
0 & \text{if } (\mathbf{H} \mathbf{H}^T + \mathbf{I}_{m})_{i,j} > 0 \text{ or } i=0 \text{ or } j=0 \\
-\infty & \text{otherwise}
\end{cases}
\end{equation}
where $\mathbf{H} \in \{0,1\}^{m \times n}$ is the parity-check matrix and $\mathbf{I}_m$ is the identity matrix. The mask permits attention between: (i) each syndrome and itself ($\mathbf{I}_m$), (ii) syndrome pairs that share physical qubits ($\mathbf{H} \mathbf{H}^T > 0$), and (iii) all syndromes with the global aggregation token (corresponding to $i,j=0$). For the logical stream, logical cross-attention enables logical tokens to attend to the updated syndrome representations. 
\begin{equation}
   \mathbf{Q}_{L}^{[l-1]} = \widetilde{\mathbf{T}}_{L}^{[l-1]} \mathbf{W}_Q^{[l]} ; \quad
\mathbf{K}_{S}^{[l]} = \mathbf{T}_{S}^{[l]} \mathbf{W}_K^{[l]}; \quad
\mathbf{V}_{S}^{[l]} = \mathbf{T}_{S}^{[l]} \mathbf{W}_V^{[l]}  
\end{equation}
\begin{equation}
   \mathbf{A}_{L}^{[l]}= \text{Softmax}\left(d^{-1/2}\left(\mathbf{Q}_{L}^{[l-1]}{\mathbf{K}^{[l]}_{S}}^{T} \right) \right)\mathbf{V}_{S}^{[l]} 
\end{equation}
Logical tokens employ unrestricted attention patterns, enabling global syndrome integration. Following attention computation, residual connections combine the attention outputs with input tokens. Both streams pass through standard FFNs with $4\times$ expansion and GELU activation \citep{hendrycks2016gaussian}.
Finally, a second residual connection yields the layer outputs. This process transforms initial token representations into refined syndrome and logical embeddings that capture both local correlations and global quantum code structure.

\textbf{Output Generation.} Final token representations are normalized and projected to outputs, where syndrome tokens (excluding the global token) produce physical error predictions $\hat{\mathbf{e}} = \mathbf{W}_{\text{out,S}} \cdot (\widetilde{\mathbf{T}}_{S,\text{no-global}}^{[N]} \cdot \mathbf{w}_\mathrm{pool,S})$ and logical tokens generate class $\hat{\boldsymbol{\ell}} = \mathbf{W}_{\text{out,L}} \cdot (\widetilde{\mathbf{T}}_{L}^{[N]}\cdot\mathbf{w}_\mathrm{pool,L})$.

\subsection{Stage 3:  Constraint-Projected Nullspace Descent (CPND)}
Neural decoders face a constraint challenge: networks learn correlations but cannot guarantee recovery operators satisfy syndrome consistency over GF(2).
CPND bridges this gap through constraint enforcement preserving learned representations. It operates via (i) exact projection ensuring syndrome consistency, and (ii) greedy descent using transformer probabilities to guide optimization toward lower-weight solutions.
A complete derivation and description of the method is provided in \autoref{app: CPND}. The raw prediction $\hat{\mathbf{e}}$ (its hard decision, $e^{\text{pred}}$) is not guaranteed to satisfy the input syndrome constraint. Furthermore, the direct logical class prediction $\hat{\boldsymbol{\ell}}$ provides a slightly superior estimate of the logical class than the class implied by the raw error prediction. The definitive output, $\mathbf{e}(\mathbf{s})$, is therefore produced by the CPND stage, which enforces two critical constraints (i) the syndrome constraint $\mathbf{s} = \mathbf{H}\mathbf{e}(\mathbf{s})$ and (ii) the target logical class $\hat{\boldsymbol{\ell}} = \mathbf{L}\mathbf{e}(\mathbf{s})$, where $\mathbf{L} \in \{0,1\}^{2k \times n}$ encodes the logical operators. This stage uses the transformer outputs $\hat{\mathbf{e}}$ as informative priors.

\subsection{Logical-Centric Loss Design}\label{subsec:method-losses}
Our training objective combines three loss terms addressing quantum degeneracy by minimizing LER via informed priors, direct classification, and differentiable constraint approximation.

\textbf{Informed logical priors loss} trains the auxiliary MLP $b_{\phi}(\mathbf{s})$ to map syndromes to logical classes:
\begin{equation}
    \mathcal{L}_{LP}=\text{CE}(\tilde{\boldsymbol{\ell}},y_{\text{class}})
\end{equation}
where $y_{\text{class}}$ encodes the true logical syndrome as a class index, providing informed priors to guide transformer processing.

\textbf{Logical class prediction loss} supervises the transformer's refined logical output:
\begin{equation}
    \mathcal{L}_{\text{LC}} = \text{CE}(\hat{\boldsymbol{\ell}}, y_{\text{class}})
\end{equation}
ensuring accurate logical classification after cross-attention processing.

\textbf{Logical-minimum entropy loss.} A key challenge in neural quantum decoding is enforcing the discrete constraint that recovery operators must preserve logical information. Specifically, we require the true error $\mathbf{e}^{\text{true}}$ and the recovery operator $\mathbf{e}^{pred}$ satisfy $\mathbf{L}(\mathbf{e}^{\text{true}} \oplus \mathbf{e}^{\text{pred}}) =\mathbf{0}$ over GF(2), where $\mathbf{e}^{pred}$ is hard decision on the logits $\hat{\mathbf{e}}$
and the residual error $\mathbf{r} = \mathbf{e}^{\text{true}} \oplus \mathbf{e}^{\text{pred}}$ must be a stabilizer for successful error recovery.
Our contribution is a differentiable approximation to this discrete constraint. We model the probability that each residual bit is flipped as $\Pr(r_i=1|e_i^{true}) = q_i = \sigma((1-2e_i^{\text{true}})\hat{e}_i) \quad \text{for} \quad i=1,\ldots,n$ and $\sigma$ is the sigmoid function. For each logical operator $\mathbf{L}_i$, the probability of violating the logical constraint is:
\begin{equation}
    \Pr\left(\mathbf{L}_i \cdot \mathbf{r}=1\right) = \Pr\left(\bigoplus_{j \in \chi_i}\mathrm{L}_{i,j}r_j=1\right)=\tfrac{1}{2}\Bigl[\,1-\prod_{j \in \chi_i}(1-2q_j)\Bigr]
\end{equation}
 where $\chi_i$ are the non zero elements set in $\mathbf{L}_i$.
 Our logical-minimum entropy loss minimizes the expected number of logical violations (full derivation is provided in Appendix \ref{app: entropy-loss}):
 \begin{equation}
     \mathcal{L}_{\text{Entropy}} = -\frac{1}{2k}\sum_{i=1}^{2k} \log\big(1-\Pr(\mathbf{L}_i \cdot \mathbf{r}=1)\big)
 \end{equation}



The combined objective is $\mathcal{L} = \lambda_{\text{LP}} \mathcal{L}_{\text{LP}} + \lambda_{\text{LC}} \mathcal{L}_{\text{LC}} + \lambda_{\text{Entropy}} \mathcal{L}_{\text{Entropy}}$.

\section{Experiments and Results}\label{sec:results}

Our method was evaluated on various stabilizer codes and noise types, demonstrating the generality of the SAQ-Decoder framework. We primarily focused on topological codes due to their prominence in fault-tolerant quantum computing, specifically toric codes \citep{kitaev1997quantum} and rotated surface codes \citep{bombin2007optimal}. To further demonstrate the framework's broad applicability, we included an evaluation of the Repetition Code and the Color Code using stim \citep{gidney2021stim}. We evaluated our method under three well-studied noise models: independent noise, depolarizing noise and circuit noise. Detailed descriptions of the code constructions and stabilizer geometries are provided in Appendix \ref{app:code}, training details and hyperparameters are provided in Reproducibility Statement.
We evaluate our approach against three key baselines: the QECCT \citep{choukroun2024deep}, a state-of-the-art neural decoder that outperforms classical methods, Belief Propagation with Order-2 Ordered Statistics Decoder (BPOSD-2) \citep{roffe2020decoding}, and MWPM algorithm \citep{fowler2013minimum}, the gold standard classical decoder for surface codes. The BPOSD family of decoders is widely used, although its worst-case complexity scales as $\mathcal{O}(n^3)$\citep{demarti2024decoding}, our implementation leverages the optimized implementation \citep{Roffe_LDPC_Python_tools_2022} to provide a strong, practical classical benchmark for quantum codes. Although MWPM has a worst-case complexity $\mathcal{O}(n^3 \log n)$ \citep{demarti2024decoding}, we use the optimized implementation from \cite{higgott2022pymatching} which achieves near-quadratic complexity and serves as the primary classical benchmark for topological codes. We also consider the performance of the raw Syndrome Stream prediction, termed SAQ-Decoder (No CPND), as an architectural baseline. As our primary evaluation metric, we use the LER, which measures the probability that the QEC process fails to properly recover the encoded logical information. We also evaluate the code threshold, i.e., the critical noise rate below which increasing code distance improves performance. 



\begin{figure}[t]
  \centering
  \begin{subfigure}[t]{0.32\textwidth}
    \centering
    \includegraphics[width=\linewidth]{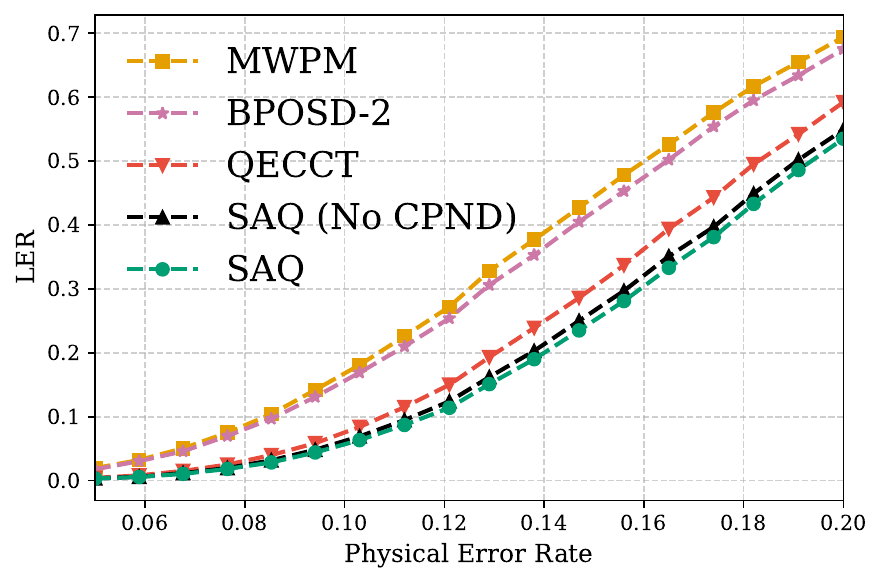}
    \caption{$L_{\text{code}}=6$}\label{fig:toric_dep_L6}
  \end{subfigure}\hfill%
  \begin{subfigure}[t]{0.32\textwidth}
    \centering
    \includegraphics[width=\linewidth]{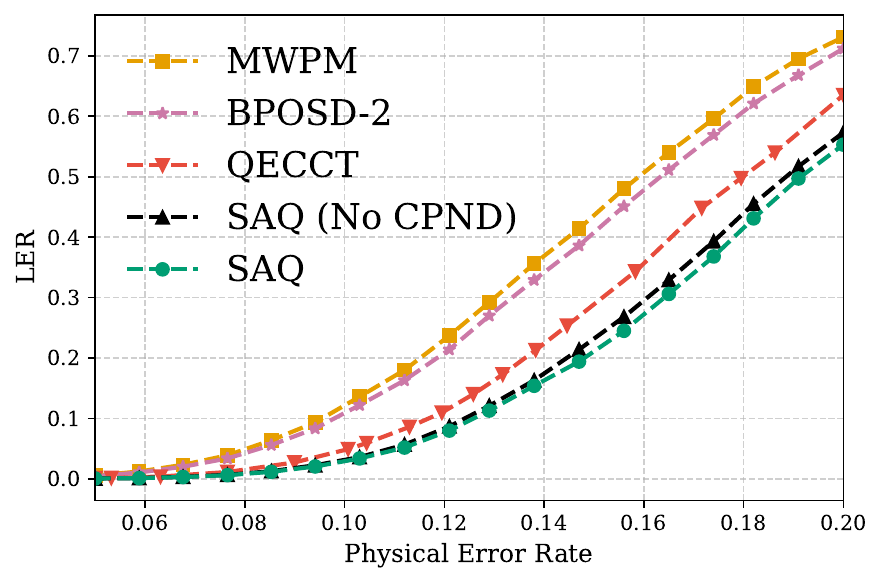}
    \caption{$L_{\text{code}}=8$}\label{fig:toric_dep_L8}
  \end{subfigure}\hfill%
  \begin{subfigure}[t]{0.32\textwidth}
    \centering
    \includegraphics[width=\linewidth]{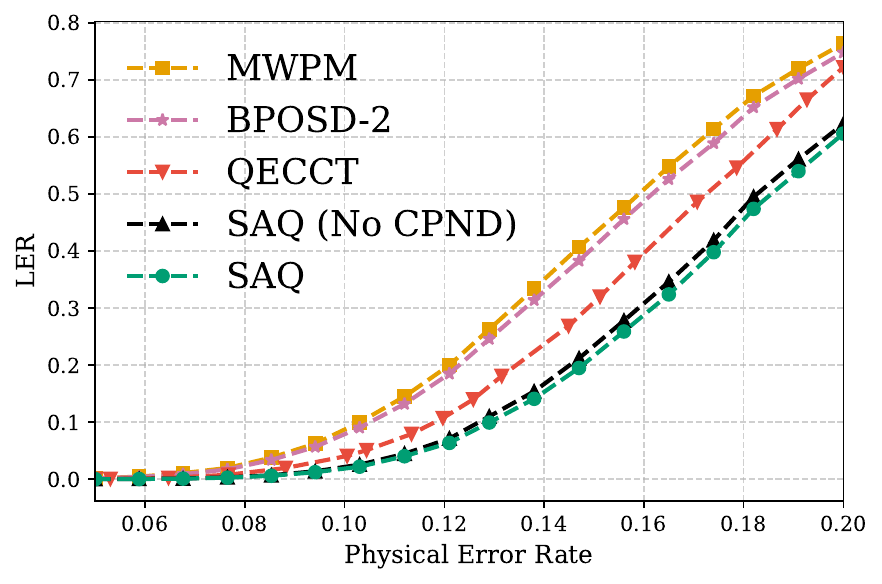}
    \caption{$L_{\text{code}}=10$}\label{fig:toric_dep_L10}
  \end{subfigure}
  \caption{Toric code - depolarizing noise model results}
  \label{fig:toric_dep_results}
\end{figure}

\begin{figure}[t]
  \centering
  \begin{subfigure}[t]{0.32\textwidth}
    \centering
    \includegraphics[width=\linewidth]{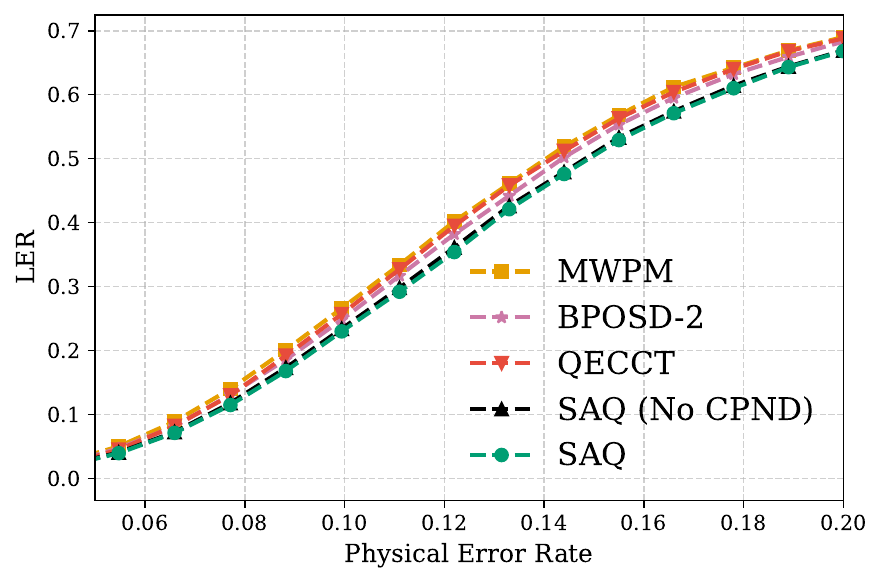}
    \caption{$L_{\text{code}}=6$}\label{fig:toric_ind_L6}
  \end{subfigure}\hfill%
  \begin{subfigure}[t]{0.32\textwidth}
    \centering
    \includegraphics[width=\linewidth]{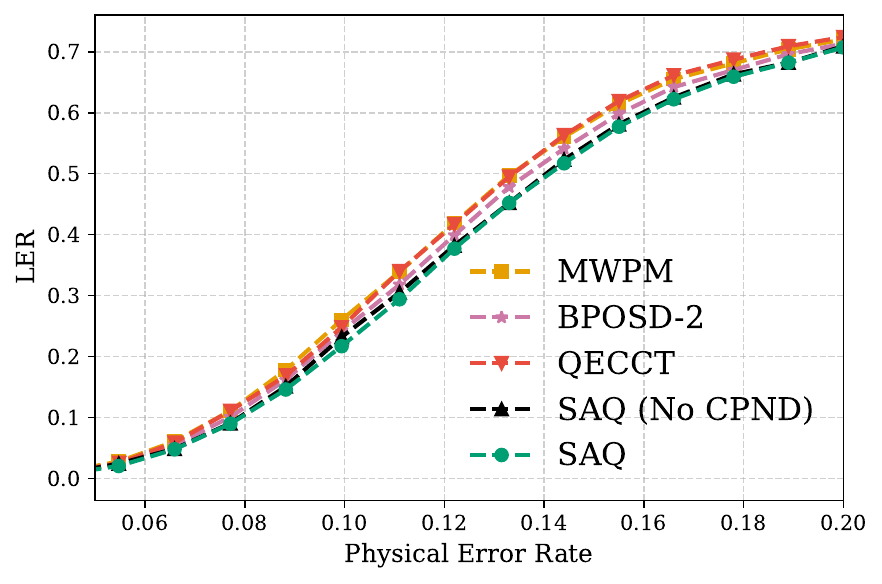}
    \caption{$L_{\text{code}}=8$}\label{fig:toric_ind_L8}
  \end{subfigure}\hfill%
  \begin{subfigure}[t]{0.32\textwidth}
    \centering
    \includegraphics[width=\linewidth]{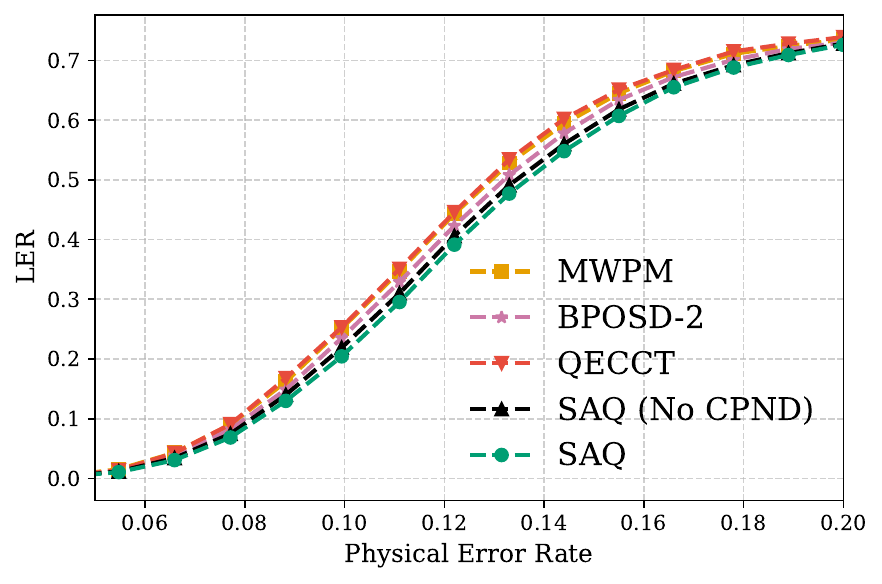}
    \caption{$L_{\text{code}}=10$}\label{fig:toric_ind_L10}
  \end{subfigure}
  \caption{Toric code - independent noise model results}
  \label{fig:toric_ind_results}
\end{figure}
\subsection{Experimental results}
The experiments span code lengths from $L_{\text{code}} = 3$ to
$11$, comparable to those evaluated in QECCT. These results demonstrate that SAQ-Decoder achieves superior decoding performance compared to state-of-the-art baselines across diverse QEC scenarios.
Figure \ref{fig:toric_dep_results} presents the performance of toric codes under depolarizing noise, where SAQ-Decoder exhibits striking advantages. Our method demonstrates consistent superiority across all code distances, with particularly dramatic improvements at $ L_{\text{code}}=10$ where SAQ-Decoder achieves $25-50\%$ lower LER compared to MWPM and and BPOSD-2 at physical error rates above 0.15. The scalability benefits are clearly evident as the performance gap widens from $L_{\text{code}}=6$ to $L_{\text{code}}=10$. Similarly, under independent noise in Figure \ref{fig:toric_ind_results}, SAQ-Decoder maintains robust performance across all code sizes, with particularly notable advantages at $L_{\text{code}}=8$ and $L_{\text{code}}=10$.
Conversely, QECCT exhibits minimal performance gains relative to MWPM and BPOSD-2.
Figure \ref{fig:rot_results} demonstrates our method's performance on rotated surface codes under independent and depolarizing noise models.
Under depolarization noise in Figures \ref{fig:rot_dep_L7}--\ref{fig:rot_dep_L11}, SAQ-Decoder consistently maintains lower LER than QECCT, BPOSD-2 and MWPM across the entire noise range.
Under independent noise conditions in Figures \ref{fig:rot_ind_L7}--\ref{fig:rot_ind_L11}, SAQ-Decoder demonstrates even more substantial improvements, while QECCT exhibits inferior performance compared to MWPM and BPOSD-2.
These results validate the effectiveness of our learned decoder with post-processing approach across the spectrum of topological QEC codes and noise models. We attribute SAQ-Decoder's superior performance over QECCT to two key factors: (i) while QECCT focuses on reducing Bit Error Rate (BER), which is not the primary objective in QEC, SAQ-Decoder directly optimizes for logical error suppression; and (ii) our novel architecture combines direct logical class prediction with qubit flip priors, integrating these observations in a post-processing stage that guarantees syndrome consistency—unlike QECCT, which only predicts qubit-level flips without ensuring this crucial constraint.
Crucially, the SAQ-Decoder (No CPND) variant consistently outperforms all classical and neural baselines across the entire noise spectrum, demonstrating the power of the dual-stream architecture alone.
The SAQ-Decoder achieves error thresholds of 10.99\% and 10.7\% for toric and rotated surface codes respectively under independent noise, and error thresholds of 18.6\% and 18.3\% under depolarizing noises.
For the toric code under depolarizing noise (Figure \ref{fig:toric_dep_thr}), this threshold of 18.6\% approaches the ML bound of 18.9\% \citep{bombin2012strong} while maintaining linear complexity in syndrome size. This significantly outperforms BPOSD-2 and MWPM (16\%) \citep{wang2009threshold} and exceeds the previous QECCT result (17.8\%).
For toric codes under independent noise (Figure \ref{fig:toric_ind_thr}), we achieve a threshold of 10.99\% while significantly outperforming BPOSD-2 (10.8\%) and MWPM (10.3\%) \citep{wang2003confinement, higgott2022pymatching}, essentially reaching the ML threshold estimated between 10.9\% and 11.0\% \citep{eczoo_toric}. 
For rotated surface codes, SAQ-Decoder demonstrates remarkable consistency with the toric code performance. Under depolarizing noise, we achieve a threshold of 18.3\% (Figure \ref{fig:rot_dep_thr}), significantly exceeding both QECCT (17.2\%), BPOSD-2  (14.1\%, based on our experiments) and MWPM (14.0\%, \citep{demarti2024decoding}), while under independent noise, the threshold reaches 10.7\% (Figure \ref{fig:rot_ind_thr}) where QECCT achieved 10.3\%, BPOSD-2 10.2\% and MWPM 10.6\%, based on our experiments. These findings indicate that our approach generalizes effectively across different topological code geometries without compromising performance.
A detailed comparison of our depolarizing noise threshold against other neural and classical decoders is provided in Appendix \ref{appendix:threshold}.

\begin{figure}[t]
  \centering

  \begin{subfigure}[t]{0.24\textwidth}
    \centering
    \includegraphics[width=\linewidth]{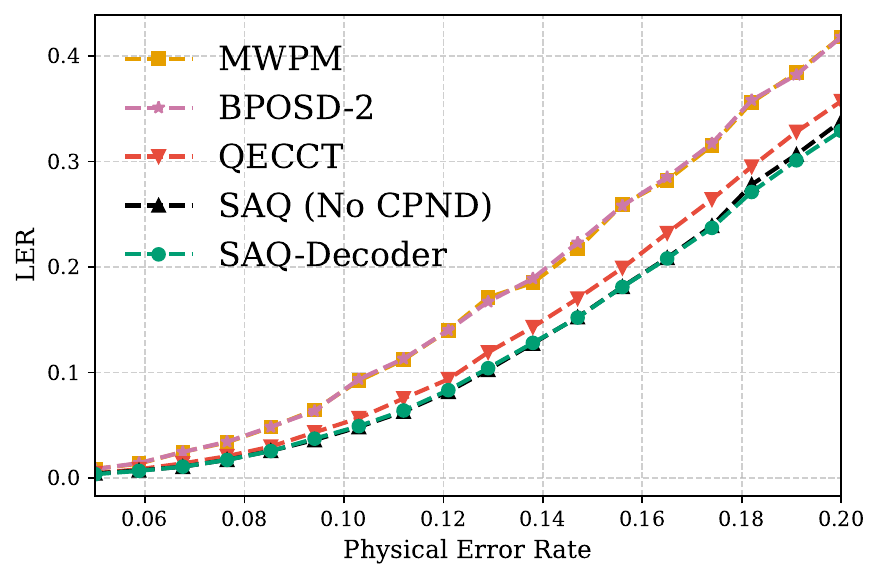}
    \caption{Dep, $L_{\text{code}}=7$}\label{fig:rot_dep_L7}
  \end{subfigure}\hfill
  \begin{subfigure}[t]{0.24\textwidth}
    \centering
    \includegraphics[width=\linewidth]{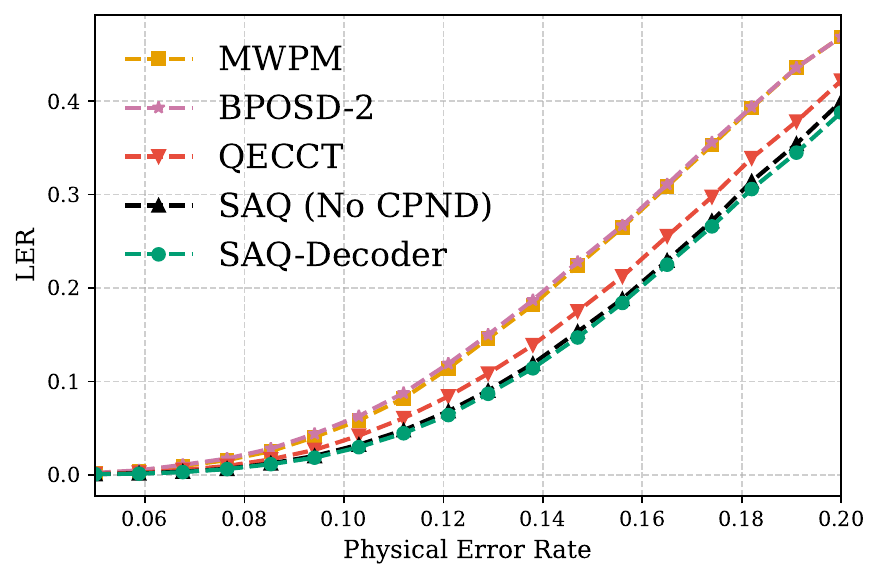}
    \caption{Dep, $L_{\text{code}}=11$}\label{fig:rot_dep_L11}
  \end{subfigure}\hfill
  \begin{subfigure}[t]{0.24\textwidth}
    \centering
    \includegraphics[width=\linewidth]{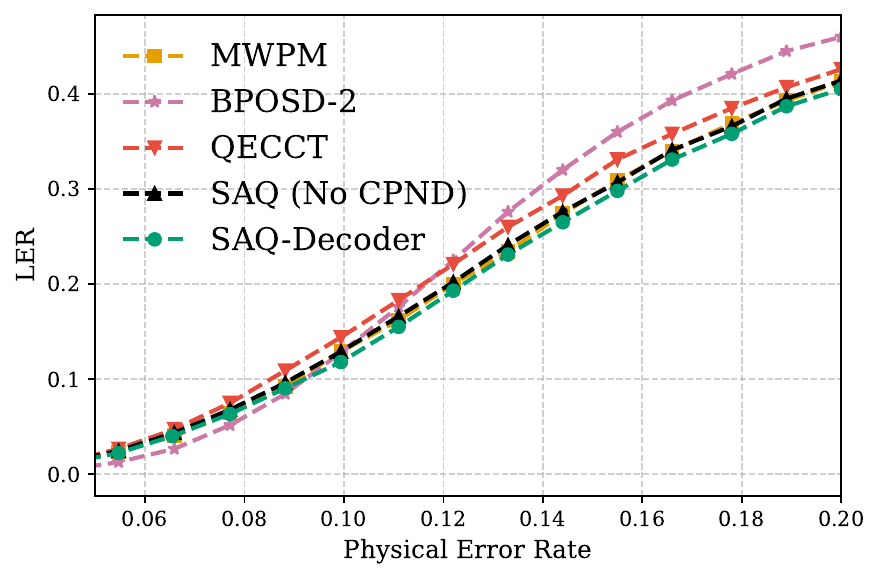}
    \caption{Ind, $L_{\text{code}}=7$}\label{fig:rot_ind_L7}
  \end{subfigure}\hfill
  \begin{subfigure}[t]{0.24\textwidth}
    \centering
    \includegraphics[width=\linewidth]{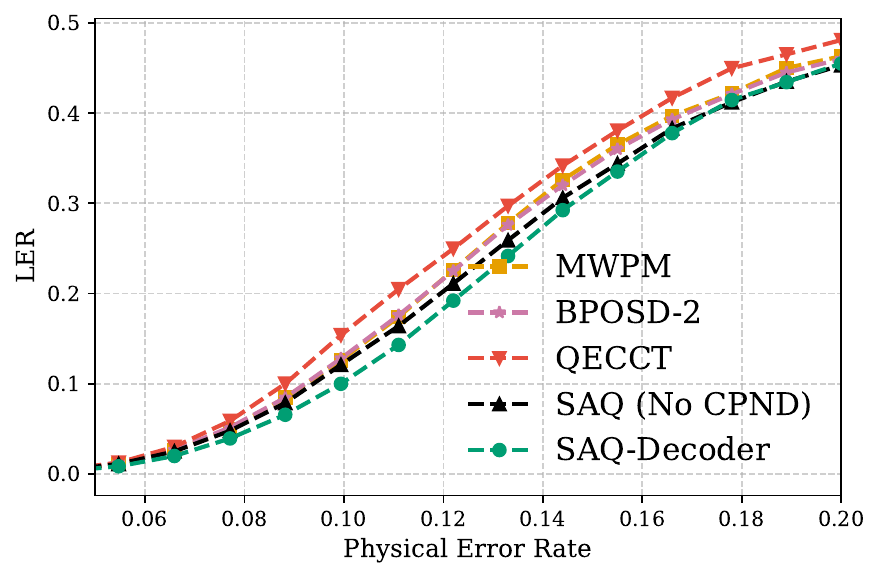}
    \caption{Ind, $L_{\text{code}}=11$}\label{fig:rot_ind_L11}
  \end{subfigure}

  \caption{Rotated surface code results}
  \label{fig:rot_results}
\end{figure}

\begin{figure}[t]
  \centering

  \begin{subfigure}[t]{0.23\textwidth}
    \centering
    \includegraphics[width=\linewidth]{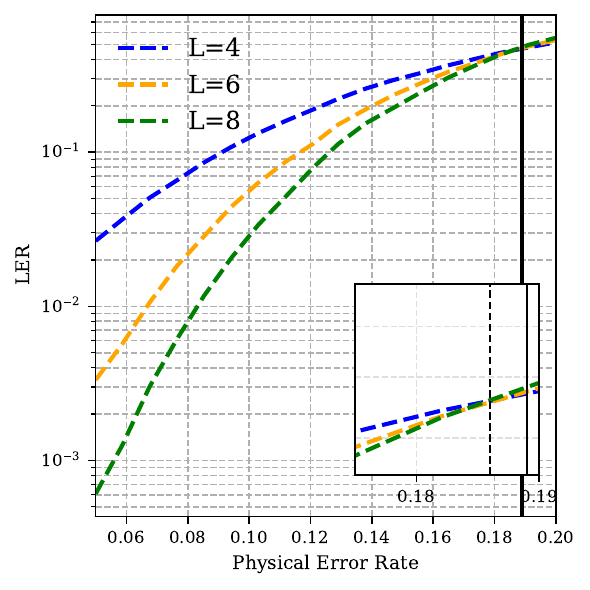}
    \caption{Toric (dep)}\label{fig:toric_dep_thr}
  \end{subfigure}\hfill
  \begin{subfigure}[t]{0.23\textwidth}
    \centering
    \includegraphics[width=\linewidth]{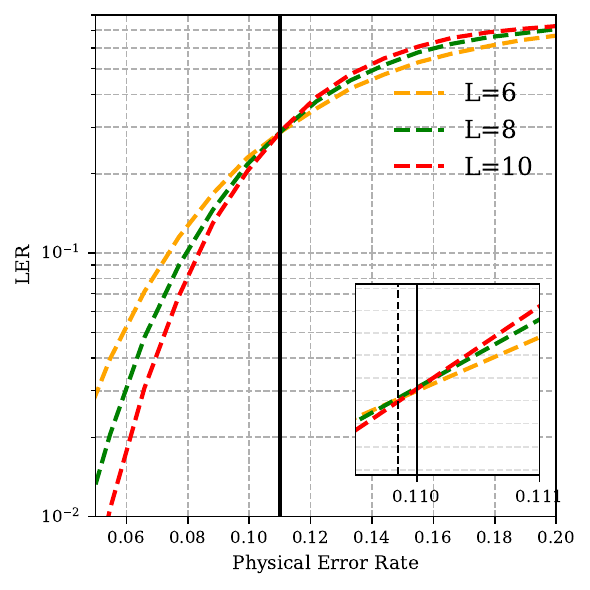}
    \caption{Toric (ind)}\label{fig:toric_ind_thr}
  \end{subfigure}\hfill
  \begin{subfigure}[t]{0.23\textwidth}
    \centering
    \includegraphics[width=\linewidth]{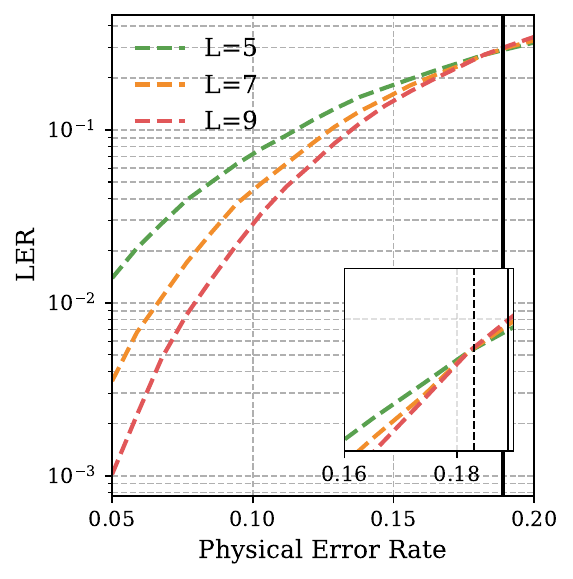}
    \caption{Rot. surface (dep)}\label{fig:rot_dep_thr}
  \end{subfigure}\hfill
  \begin{subfigure}[t]{0.23\textwidth}
    \centering
    \includegraphics[width=\linewidth]{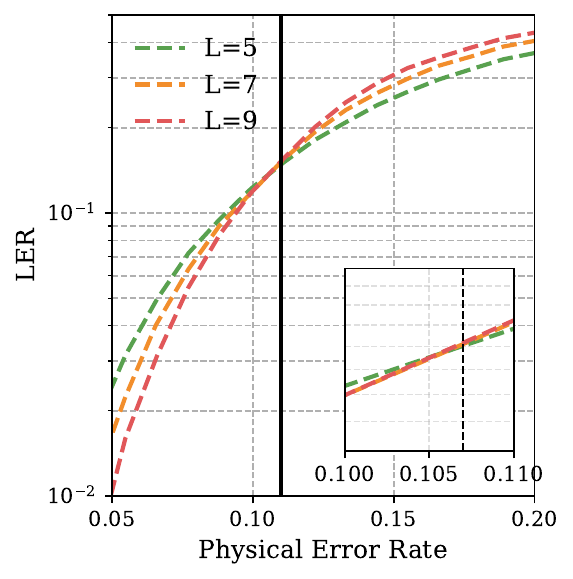}
    \caption{Rot. surface (ind)}\label{fig:rot_ind_thr}
  \end{subfigure}\hfill

  \caption{Error threshold analysis across topological codes and noise models.}
  \label{fig:thr}
\end{figure}

To rigorously validate the generalizability of our SAQ-Decoder framework beyond surface codes, we conducted new experiments on two distinct code families: the color code $(L_{code}=3, 5)$ and the repetition code $(L_{code}=3, 5)$. Critically, both experiments were performed under realistic, multi-round circuit-level noise models. These experiments demonstrate robustness confirm our framework's claim of generality, as it is fundamentally agnostic to the code family.
Figure \ref{fig:color_cir_L3} illustrates the LER performance on a distance 3 color code with 2 rounds of circuit noise, demonstrating robustness with high marginal gaps from the baselines.
In Figure \ref{fig:color_cir_L5}, SAQ-Decoder significantly outperforms all baselines across the entire range of physical error rates. For example, at the highest analyzed rate of $p=0.02$, SAQ-Decoder achieves a LER that is $17.0\%$ lower than QECCT and $64.2\%$ lower than the MWPM baseline. Figure \ref{fig:rep_cir_L3} presents the LER results for a distance 3 repetition code with 3 rounds of circuit noise, showing that the decoder remains robust with performance gaps relative to the baseline methods.
Similarly, Figure \ref{fig:rep_cir_L5} shows the results for a distance 5 repetition code with 3 rounds of circuit noise. While the performance of all decoders is closer on this code, SAQ-Decoder consistently maintains the lowest logical error rate.
At $p=0.25$, SAQ-Decoder achieves a LER that is 1.83\% lower than QECCT and 2.61\% lower than MWPM.

\begin{figure}[t]
  \centering
  \begin{subfigure}[t]{0.24\textwidth}
    \centering
    \includegraphics[width=\linewidth]{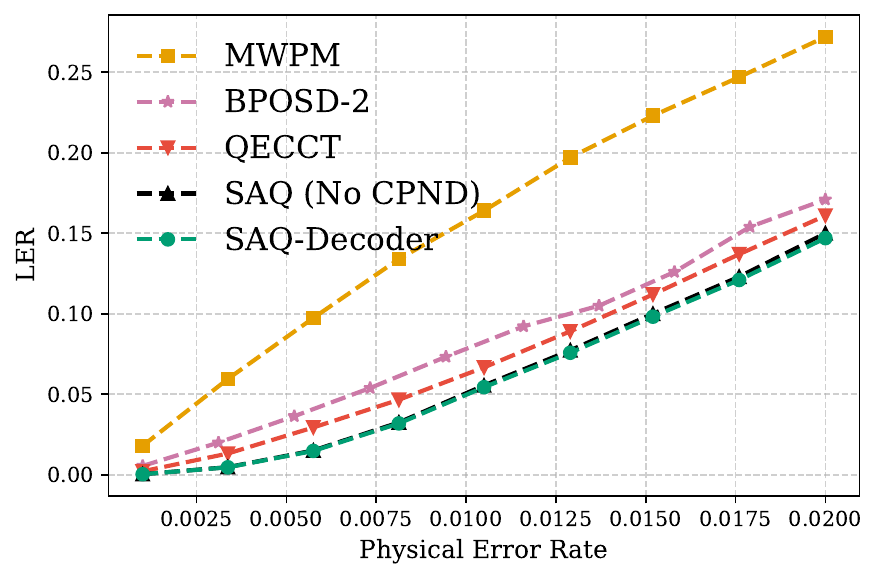}
    \caption{$L_{\text{code}}=3$}\label{fig:color_cir_L3}
  \end{subfigure}\hfill
  \begin{subfigure}[t]{0.24\textwidth}
    \centering
    \includegraphics[width=\linewidth]{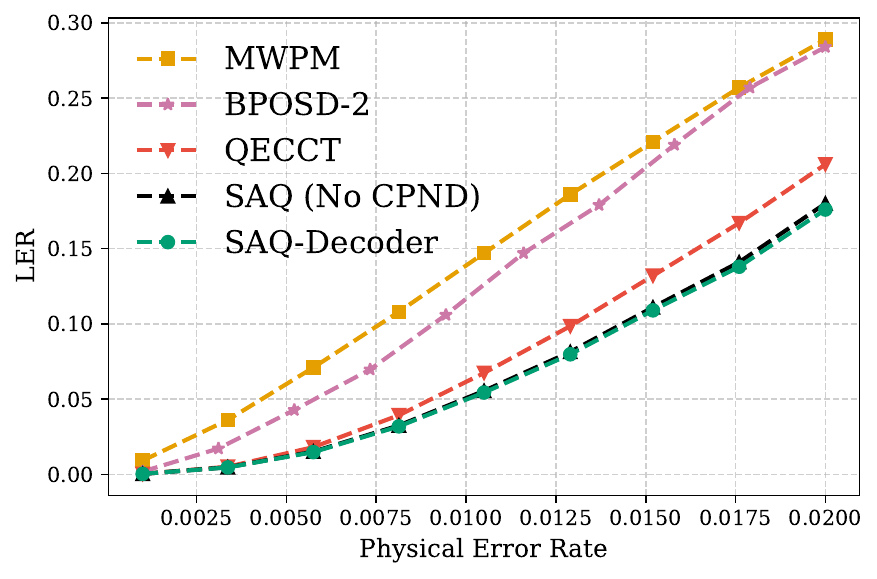}
    \caption{ $L_{\text{code}}=5$}\label{fig:color_cir_L5}
  \end{subfigure}\hfill
  \begin{subfigure}[t]{0.24\textwidth}
    \centering
    \includegraphics[width=\linewidth]{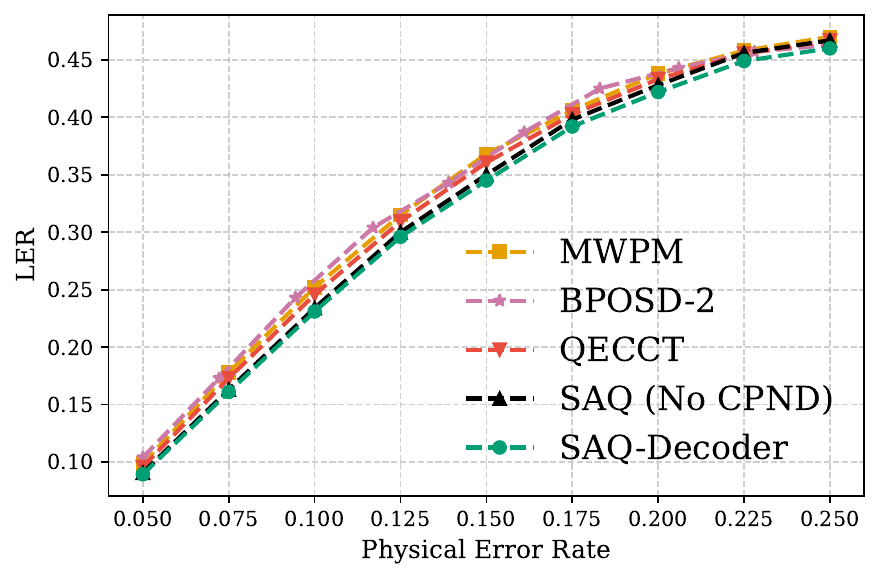}
    \caption{$L_{\text{code}}=3$}\label{fig:rep_cir_L3}
  \end{subfigure}\hfill
  \begin{subfigure}[t]{0.24\textwidth}
    \centering
    \includegraphics[width=\linewidth]{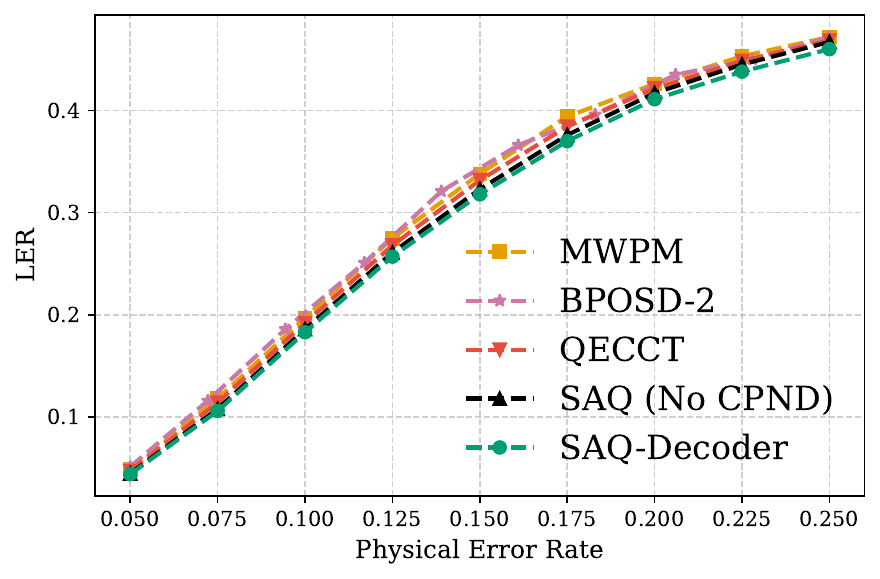}
    \caption{ $L_{\text{code}}=5$}\label{fig:rep_cir_L5}
  \end{subfigure}
  
  \caption{Color code and repetition code with circuit noise results. (a)--(b) are color code results and (c)--(d) are repetition code results.}
  \label{fig:color_cir}
\end{figure}




\subsection{Ablation studies and analysis} %
To understand the contribution of individual architectural components in our framework, we conduct comprehensive ablation studies on a toric code with distance $L_{\text{code}}=6$ under depolarizing noise. These studies systematically evaluate the impact of the global token, multi-objective loss formulation, and dual-stream architecture on model performance and training dynamics.


\textbf{Dual-Stream Study.}
To validate our dual-stream design, we conducted an ablation study examining four architectural variants: (1) separate weights per stream within a layer instead of weight sharing across the decoder stack, (2) symmetric cross-attention where both streams attend to each other rather than our asymmetric design, (3) logical-stream-only architecture removing syndrome processing, and (4) syndrome-stream-only architecture removing logical processing, as shown in Figure \ref{fig:architectural_ablation_study}. The results reveal a clear performance hierarchy from worst to best: logical-stream-only,  bidirectional cross-attention, syndrome-stream-only, no weight sharing, and our full SAQ-Decoder architecture. These ablations demonstrate that  single-stream variants perform poorly, weight sharing across layers slightly improves efficiency while halving the parameter count, and asymmetric information flow from syndromes to logical inference outperforms symmetric attention, validating the importance of our specialized dual-stream processing design for effective QEC.

\textbf{Multi-Loss Ablation.}
To investigate the relative importance of different training objectives in our framework, we conduct an ablation study on the loss function components, varying the weighting parameters $\lambda_{LP}$,
$\lambda_{LC}$, and $\lambda_{Entropy}$ for logical prior, logical classification, and entropy regularization respectively, as shown in Figure \ref{fig:loss_ablation_study}. The full multi-objective formulation $\left(\lambda_{LP}, \lambda_{LC}, \lambda_{Entropy} = 0.2, 1.0, 1.0\right)$
 achieves the lowest final average LER of 1.972e-01. Systematic removal of individual components reveals measurable performance degradation: removing logical classification increases LER to 2.113e-01 (+7.2\%), removing logical prior to 2.055e-01 (+4.2\%), and removing entropy regularization to 2.047e-01 (+3.8\%). These results confirm that all three objectives contribute meaningfully to the model's QEC performance, with logical classification being the most critical component.
 
\textbf{Effect of Global Token.} 

The inclusion of a global token (\textit{SAQ-Decoder}) improves both training dynamics and final performance compared to the masked architecture without a global token (\textit{Mask Only}), as shown in Figure \ref{fig:global_token_ablation_study}. The full SAQ-Decoder achieves faster convergence and lower final LER ($\sim0.19$ versus $\sim0.21$ average LER). Notably, attention masking itself provides substantial benefits, with the mask-only architecture significantly outperforming the unmasked baseline (\textit{Baseline}, neither mask nor global token). The global token acts as a syndrome-level aggregator, enabling the model to capture global syndrome patterns that local interactions might miss.

\textbf{Computational Complexity.}
Our model achieves $ O(N m d^2)$ time complexity per forward pass by exploiting sparse attention patterns,
avoiding the naive $ O(N m^2 d)$ complexity of dense attention. The subsequent CPND refinement requires $O(m)$ time for online inference. This yields optimal linear scaling in the syndrome length and quadratic scaling in code distance. The $2^{2k}$ embedding term is negligible since $k$ is small in practice (e.g., $k=1$ for surface codes and $k=2$ for toric codes).
Our numerical comparison in Table \ref{tab:numerical_complexity_abridged} demonstrates that the SAQ-Decoder achieves significantly lower FLOPs and faster inference time compared to QECCT, validating its suitability for real-time QEC decoding. Extended data is available in Appendix \ref{appendix:complexity}.

\begin{table*}[h] 
\centering
\small 
\caption{Abridged Numerical Complexity Comparison (Toric Code)}
\label{tab:numerical_complexity_abridged}
\begin{tabular}{l c c c c}
\toprule
\textbf{Metric} & \multicolumn{2}{c}{\textbf{L=6 (Depol)}} & \multicolumn{2}{c}{\textbf{L=10 (Depol)}} \\
\cmidrule(lr){2-3} \cmidrule(lr){4-5}
 & \textbf{SAQ-Decoder} & \textbf{QECCT} & \textbf{SAQ-Decoder} & \textbf{QECCT} \\
\midrule
Total FLOPs [G $\downarrow$] & \textbf{0.11} & 0.26 & \textbf{0.39} & 0.71 \\
Inference Time [ms $\downarrow$] & \textbf{0.06} & 0.34 & \textbf{0.45} & 2.3 \\
\bottomrule
\end{tabular}
\end{table*}
\textbf{Parameter Efficiency.} Figure \ref{fig:params_comp} demonstrates that SAQ-Decoder for toric code maintains near-constant parameter count $(~1.2-1.9M)$ across code distances $L_{\text{code}}=4$ to $L_{\text{code}}=10$ for both noise models, exhibiting excellent scalability. In contrast, QECCT suffers from significant parameter growth, reaching $6.71M$ parameters at $L_{\text{code}}=10$ under depolarizing noise, which is a $3.5\times$ increase over our method. This difference stems from fundamental architectural choices: while QECCT processes both qubit and syndrome information through transformer layers, our approach leverages logical class prediction to decouple syndrome processing from the full qubit space dimensionality. Although both methods employ sparse attention patterns, SAQ-Decoder's logical class embedding strategy avoids the quadratic scaling in the qubit-syndrome space that affects QECCT.




\begin{figure}[t]
  \centering

  \centering
  \begin{subfigure}{0.23\textwidth}
    \centering
    \includegraphics[width=\linewidth]{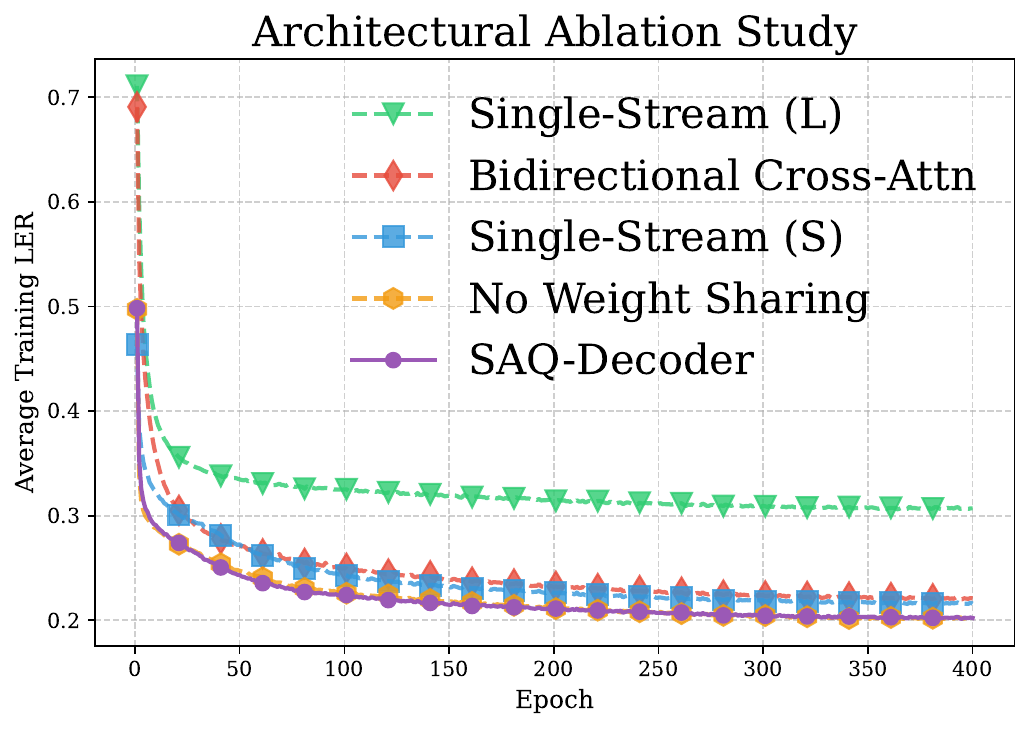}
    \caption{Architecture study.}
    \label{fig:architectural_ablation_study}
  \end{subfigure}\hfill
  \begin{subfigure}{0.23\textwidth}
    \centering
    \includegraphics[width=\linewidth]{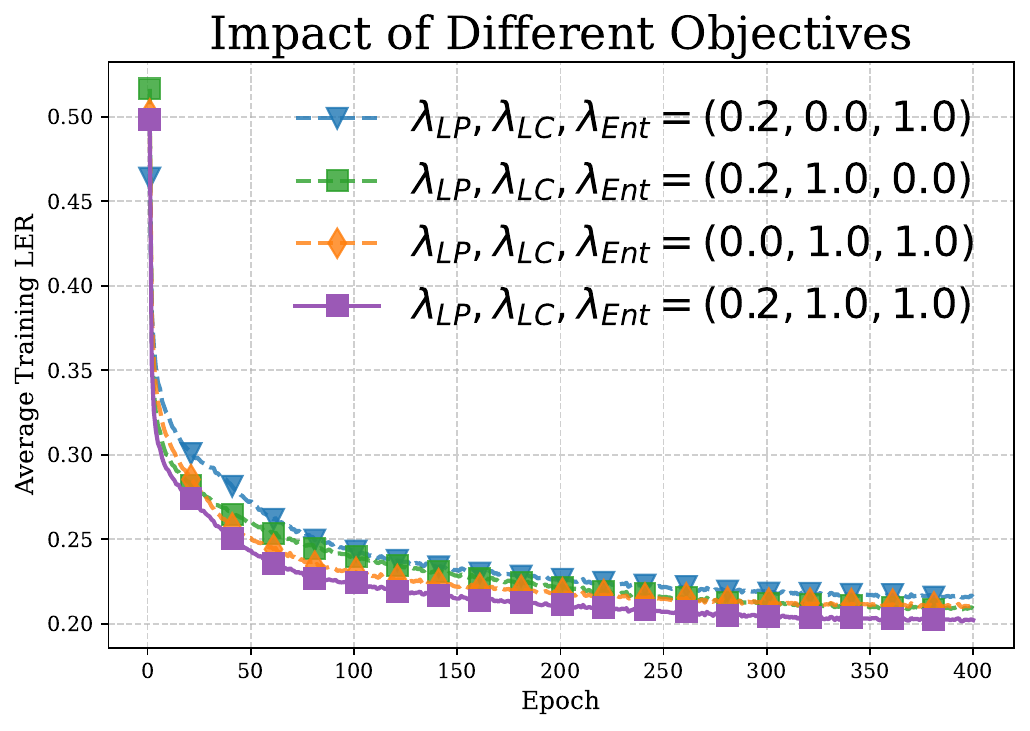}
    \caption{Loss ablation.}
    \label{fig:loss_ablation_study}
  \end{subfigure}\hfill
    \begin{subfigure}{0.23\textwidth}
    \centering
    \includegraphics[width=\linewidth]{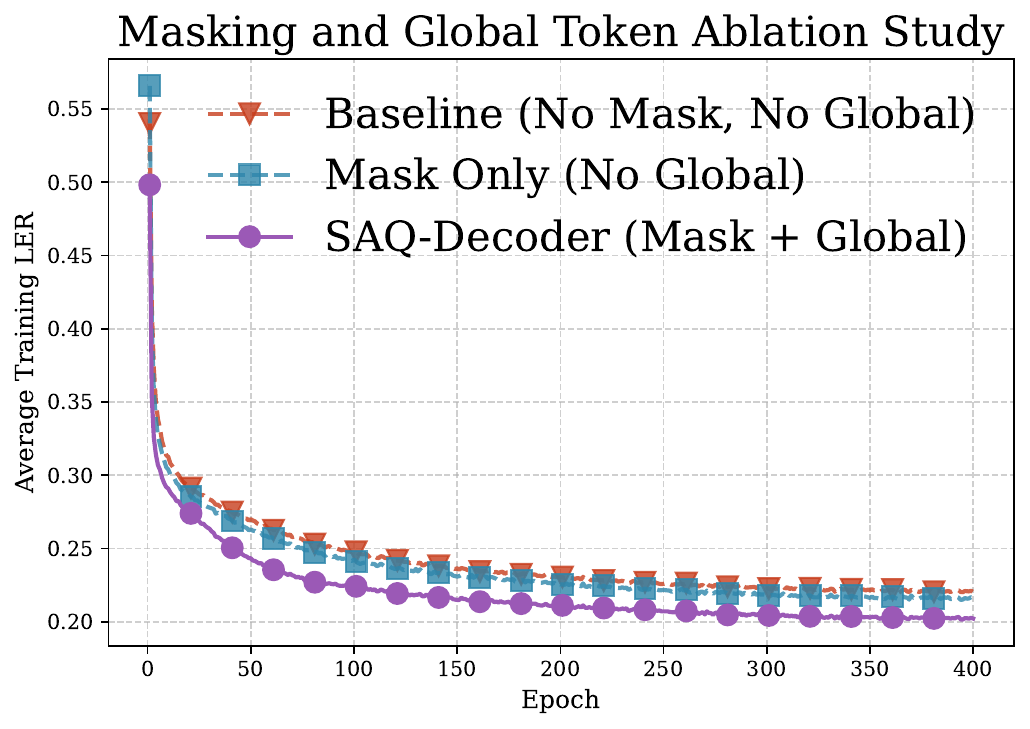}
    \caption{Global token ablation.}
    \label{fig:global_token_ablation_study}
  \end{subfigure}\hfill
  \begin{subfigure}{0.18\textwidth}
    \centering
    \includegraphics[width=\linewidth]{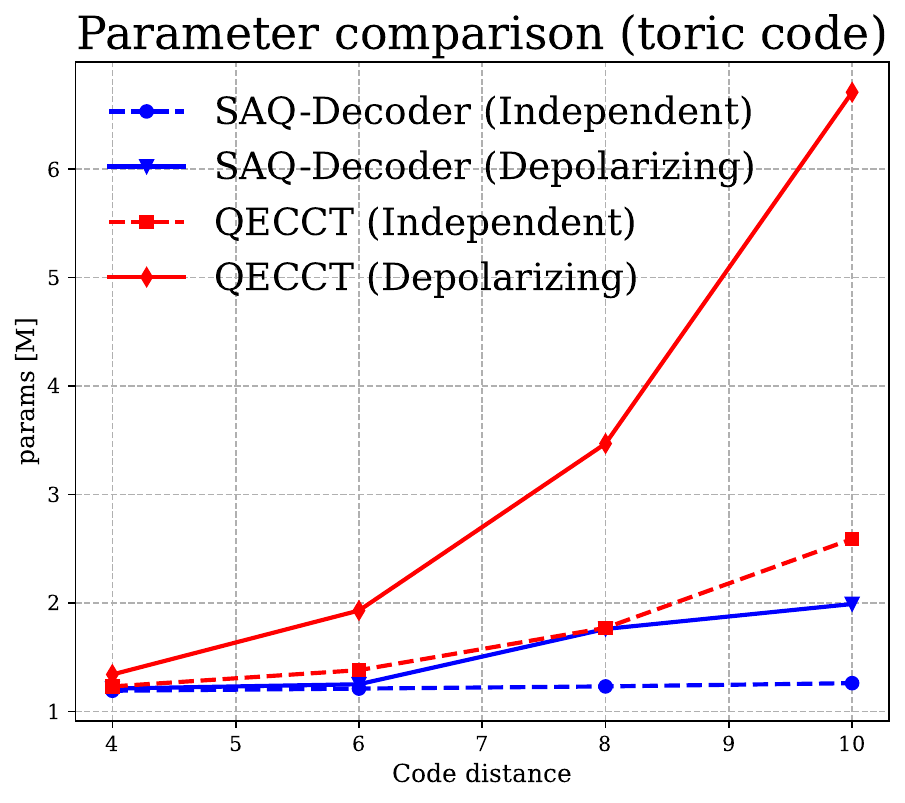}
    \caption{Param. comp.} 
    \label{fig:params_comp}
  \end{subfigure}

  \caption{Ablation studies results.}
  \label{fig:ab_results}
\end{figure}


\section{Conclusion}\label{sec:conclusion}
We introduced SAQ-Decoder, a unified QEC framework that combines learned syndrome-to-error mappings with exact syndrome constraint satisfaction. Our dual-stream architecture with asymmetric attention captures the geometric structure of stabilizer codes while maintaining linear computational complexity in syndrome size. Experimental results demonstrate error thresholds of 10.99\% and 18.6\% for toric codes under independent and depolarizing noise respectively, essentially approaching ML bounds while significantly outperforming existing neural and classical decoders. The framework's parameter efficiency and general applicability to any stabilizer code family make it a practical solution for scaling fault-tolerant quantum computation, bridging the critical gap between neural pattern recognition and the structured optimization requirements of QEC.

\newpage
\section*{Reproducibility Statement}\label{reproducibility}

\textbf{Logical-minimum entropy loss.} Due to page limitations, we provide the full derivation and exposition of the logical-minimum entropy loss in Appendix \ref{app: entropy-loss}.

\textbf{CPND stage.} A detailed derivation and exposition of the CPND stage is presented in Appendix \ref{app: CPND}.

\textbf{Training Details.} For reproducibility, we detail our complete training methodology and hyperparameters in Appendix \ref{app:training}, with full source code provided in the Supplementary Materials. Our source code is publicly available on GitHub at: \url{https://github.com/DavidZenati/SAQ-Decoder/tree/main}.

\bibliography{iclr2026_conference}
\bibliographystyle{iclr2026_conference}

\newpage
\appendix
\section{Logical-minimum entropy loss derivation}\label{app: entropy-loss}
The goal of decoding in stabilizer codes is to produce a correction 
$\mathbf{e}^{pred}$ (hard decision on the logits $\hat{\mathbf{e}}$) such that the combined error $\mathbf{r}=\mathbf{e}^{true} \oplus \mathbf{e}^{pred}$ is a stabilizer, hence acts trivially on all logical qubits. Equivalently, with $\mathbf{L}$ denoting the logical-operator matrix, the logical-coset constraint is
\begin{equation}
    \mathbf{L}\left(\mathbf{e}^{true} \oplus \mathbf{e}^{pred}\right) =\mathbf{0}\text{ over } GF(2)
\end{equation}
so decoding succeeds if no logical parity flips. Optimizing this condition directly is difficult because XOR is non-differentiable. We therefore derive a smooth, probability-calibrated surrogate that replaces hard parities by differentiable sign-expectations of Bernoulli logits. The resulting logical minimum-entropy loss maximizes the probability that each logical parity is zero while preserving the exact coset semantics in expectation and providing stable, well-aligned gradients for end-to-end training.

For each qubit \(i\), let \(\hat e_i\in\mathbb{R}\) be the model logit  and
\begin{equation}
p_i \;\triangleq\; \sigma(\hat e_i) \;=\; \frac{1}{1+e^{-\hat e_i}}
\end{equation}
the corresponding flip probability, where $\sigma(\cdot)$ is the sigmoide function. We model the predicted error bit as
\begin{equation}
e_i^{\mathrm{pred}} \sim \mathrm{Bernoulli}(p_i),
\end{equation}
while the ground-truth bit \(e_i^{\mathrm{true}}\in\{0,1\}\) is fixed for the given sample. Define the per-qubit XOR
\begin{equation}
r_i \;\triangleq\; e_i^{\mathrm{true}} \oplus e_i^{\mathrm{pred}} \in \{0,1\}.
\end{equation}

Conditioning $r_i $ on  $e_i^{true}$, we have
\begin{equation}
r_i | e_i^{true} \sim
\begin{cases}
Bernoulli \left(\sigma(\hat{e}_i)\right) & \text{if }  e_i^{true}=0\\
Bernoulli \left(1-\sigma(\hat{e}_i)\right) & \text{if }  e_i^{true}=1
\end{cases}
\end{equation}
it holds since for $r_i = 0 \oplus e_i^{pred} =  e_i^{pred} $ and $r_i = 1 \oplus e_i^{pred} =  1-e_i^{pred} $.

We focus on the non-parity conditional error probability
\begin{equation}
    q_i\triangleq Pr(r_i=1|e_i^{true}) = (1-e_i^{true})\sigma(\hat{e}_i) + (1-\sigma(\hat{e}_i))e_i^{true} 
\end{equation}
which our loss is designed to minimize; driving $q_i \rightarrow0$ forces the prediction to agree with the ground truth modulo stabilizers (i.e., no logical flip).

We now simplify $q_i$, starting from
\begin{align}\label{q_i}
    q_i= \Pr(r_i=1|e_i^{true}) 
    &= \frac{1-e_i^{true}}{1+\exp{(-\hat{e}_i)}} + \frac{\exp{(-\hat{e}_i)}\cdot e_i^{true}}{1+\exp{(-\hat{e}_i)}} \\
    &= \frac{1-e_i^{true} + \exp{(-\hat{e}_i)\cdot e_i^{true}}}{1+\exp{(-\hat{e}_i)}}
\end{align}

Let  $a = 1-2e_i^{true} \in \{-1,1\}$, so $e_i^{true} = (1-a)/2$, focusing on the numerator

\begin{align}
1-e_i^{true} + \exp{(-\hat{e}_i)\cdot e_i^{true}}
&=1+e_i^{true}\cdot(\exp{(-\hat{e}_i)}-1) \\
&=1+\frac{1-a}{2}\cdot(\exp{(-\hat{e}_i)}-1)\\
&=1+ \frac{1}{2}\cdot(\exp{(-\hat{e}_i)}-1) -\frac{a}{2}\cdot(\exp{(-\hat{e}_i)}-1) \\
&=\frac{(1+ \exp{(-\hat{e}_i)})-a\cdot(\exp{(-\hat{e}_i)}-1)}{2}
\end{align}

Plugging this back to equation \ref{q_i} gives
\begin{align}
  q_i = \Pr(r_i=1|e_i^{true}) &= \frac{(1+ \exp{(-\hat{e}_i)})-a\cdot(\exp{(-\hat{e}_i)}-1)}{2\cdot(1+\exp{(-\hat{e}_i)})} \\
  &= \frac{1}{2}\left[1 -\frac{a}{2}\cdot\frac{\exp{(-\hat{e}_i)}-1}{\exp{(-\hat{e}_i)}+1}\right]
\end{align}

Using $\frac{\exp{(-x)}-1}{\exp{(-x)}+1} = -\tanh{(x/2)}$ with $x=\hat{e}_i$
\begin{equation}
    q_i = \frac{1}{2}\left[1 +\frac{a}{2}\cdot \tanh{\left(\frac{\hat{e}_i}{2}\right)}\right]
\end{equation}
Here $\tanh{(\cdot)}$ is the hyperbolic tangent.

From the previous result,
\begin{equation}
    \Pr(r_i=1\mid e_i^{\mathrm{true}})=\frac12\Big[1 + a\,\tanh\!\big(\hat e_i/2\big)\Big],
\qquad a=1-2e_i^{\mathrm{true}}\in\{\pm1\}.
\end{equation}

Substituting \(e_i^{\mathrm{true}}=1\) (so \(a=-1\)) gives
\begin{equation}
    \Pr(r_i=1\mid e_i^{\mathrm{true}}=1)=\frac12\Big[1 - \tanh\!\big(\hat e_i/2\big)\Big]
\end{equation}
while \(e_i^{\mathrm{true}}=0\) (so \(a=+1\)) gives
\begin{equation}
    \Pr(r_i=1\mid e_i^{\mathrm{true}}=0)=\frac12\Big[1 + \tanh\!\big(\hat e_i/2\big)\Big]
\end{equation}

Since \(\tanh\) is odd, i.e., $a\,\tanh(x)=\tanh(a\,x) \text{ for } a\in\{\pm1\}$,
\begin{equation}
\Pr(r_i=1\mid e_i^{\mathrm{true}})
=\frac12\Big[1 + \tanh\!\big(a\,\hat e_i/2\big)\Big].
\end{equation}

To convert the tanh form back to a probability, we use the sigmoid–tanh identity (as shown in Proposition~\ref{prop:Sigmoid–tanh identity}).

\begin{proposition}[Sigmoid–tanh identity]\label{prop:Sigmoid–tanh identity}
For all \(y\in\mathbb{R}\),
\begin{equation}
  \displaystyle \sigma(y)=\frac{1}{1+\exp{(-y)}}=\frac12\big[1+\tanh(y/2)\big]  
\end{equation}

\end{proposition}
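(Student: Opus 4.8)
The plan is to verify the sigmoid--tanh identity by a direct algebraic manipulation, rewriting both sides in terms of exponentials and showing they coincide. First I would expand the right-hand side using the definition $\tanh(y/2) = \frac{e^{y/2} - e^{-y/2}}{e^{y/2} + e^{-y/2}}$, so that
\begin{equation}
\tfrac12\big[1 + \tanh(y/2)\big] = \tfrac12\left[1 + \frac{e^{y/2} - e^{-y/2}}{e^{y/2} + e^{-y/2}}\right] = \tfrac12 \cdot \frac{(e^{y/2} + e^{-y/2}) + (e^{y/2} - e^{-y/2})}{e^{y/2} + e^{-y/2}} = \frac{e^{y/2}}{e^{y/2} + e^{-y/2}}.
\end{equation}
Then I would multiply numerator and denominator by $e^{-y/2}$ to obtain $\frac{1}{1 + e^{-y}}$, which is exactly $\sigma(y)$.

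Alternatively, and perhaps more cleanly, one can go the other direction: starting from $\sigma(y) = \frac{1}{1+e^{-y}}$, multiply top and bottom by $e^{y/2}$ to get $\frac{e^{y/2}}{e^{y/2} + e^{-y/2}}$, then split off $\tfrac12$ by writing $\frac{e^{y/2}}{e^{y/2}+e^{-y/2}} = \tfrac12 + \left(\frac{e^{y/2}}{e^{y/2}+e^{-y/2}} - \tfrac12\right)$ and simplifying the bracketed remainder to $\tfrac12 \tanh(y/2)$. Either ordering works; I would present whichever is shorter in the write-up.

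There is essentially no obstacle here --- the statement is a standard elementary identity and the proof is a two- or three-line calculation. The only thing to be careful about is keeping the exponent bookkeeping straight (the factors of $y/2$ versus $y$), and noting that the identity holds for all $y \in \mathbb{R}$ with no case distinction needed, since the denominator $e^{y/2} + e^{-y/2}$ is strictly positive everywhere. I would close by remarking that this identity is what licenses converting the $\tanh$-form expression for $\Pr(r_i = 1 \mid e_i^{\mathrm{true}})$ back into the sigmoid form $q_i = \sigma\big((1 - 2e_i^{\mathrm{true}})\hat e_i\big)$ used in the main text, by taking $y = a\,\hat e_i$.
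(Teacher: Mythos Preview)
Your proposal is correct and essentially identical to the paper's proof: both pass through the common intermediate form $\frac{e^{y/2}}{e^{y/2}+e^{-y/2}}$ via the multiply-by-$e^{y/2}$ trick, and then split this as $\tfrac12\bigl(1+\frac{e^{y/2}-e^{-y/2}}{e^{y/2}+e^{-y/2}}\bigr)=\tfrac12[1+\tanh(y/2)]$. The paper goes left-to-right (your ``alternative'' direction) with a slight cosmetic substitution $A=e^{y/2}$, $B=e^{-y/2}$, but the argument is the same.
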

\begin{proof}
Start from sigmoid $\sigma(y)$, multiplying numerator and denominator by $\exp({y/2})$:
\begin{equation}
    \sigma(y) = \frac{1}{1+\exp({-y})} = \frac{\exp({y/2})}{\exp({y/2})+\exp({-y/2})}
\end{equation}
Let $A=\exp({y/2})$ and $B=\exp({-y/2})$.Then:
\begin{equation}
    \frac{A}{A+B} = \frac{1}{2}\frac{(A+B) + (A-B)}{A+B}  =  \frac{1}{2}\left(1 + \frac{A-B}{A+B} \right)
\end{equation}
But 
\begin{equation}
\frac{A-B}{A+B} =\frac{\exp({y/2})-\exp({-y/2}))}{\exp({y/2})+\exp({-y/2})}=\mathrm{tanh}\left(\frac{y}{2}\right)
\end{equation}
Hence $\sigma(y) = \frac{1}{2}\left[1+ \mathrm{tanh}(\frac{y}{2})\right]$
\end{proof}

Therefore,
\begin{equation}\label{eq:r_i-dist}
   q_i = \Pr(r_i=1\mid e_i^{\mathrm{true}})
=\frac12\Big[1 + \tanh\!\big(a\,\hat e_i/2\big)\Big]
=\sigma\!\big(a\,\hat e_i\big)
=\sigma\!\big((1-2e_i^{\mathrm{true}})\,\hat e_i\big) 
\end{equation}

We now pass to logical parities: each row of the logical operator matrix $\mathbf{L}$ (which correspond to a distinct logical operator) induces a parity check over $\mathbf{r}$; 
\begin{equation}
    \mathbf{L}_i \cdot \mathbf{r} = \bigoplus_{j \in \chi_i}\mathrm{L}_{i,j}r_j
\end{equation}
 where $\chi_i$ are the non zero elements set in $\mathbf{L}_i$, 
the following proposition provides $\mathbf{L}_i \cdot \mathbf{r}$ distribution in closed form.

    

\begin{proposition}[Bernoulli parity distribution]\label{prop:bernoulli-parity}
Let \(\{X_j\}_{j=1}^n\) be independent Bernoulli random variables with
\(\Pr(X_j=1)=q_j\in[0,1]\), and define the GF(2) parity
\(Q=\bigoplus_{j=1}^n X_j \in \{0,1\}\).
Then
\begin{align}
\Pr(Q=1)=\tfrac{1}{2}\Bigl[\,1-\prod_{j=1}^n(1-2q_j)\Bigr]\\
\Pr(Q=0)=\tfrac{1}{2}\Bigl[\,1+\prod_{j=1}^n(1-2q_j)\Bigr]
\end{align}
Equivalently, \(\;\mathbb{E}[(-1)^Q]=\prod_{j=1}^n(1-2q_j)\).
\end{proposition}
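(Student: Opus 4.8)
The plan is to work with the $\pm 1$ ``character'' encoding of GF(2) parities, which linearizes the XOR. The key observation is that for a single Bernoulli variable $X_j$ with $\Pr(X_j=1)=q_j$ we have $\mathbb{E}[(-1)^{X_j}] = (+1)(1-q_j) + (-1)q_j = 1-2q_j$. Since the parity $Q=\bigoplus_{j=1}^n X_j$ satisfies $(-1)^Q = (-1)^{\sum_j X_j} = \prod_{j=1}^n (-1)^{X_j}$ (the exponent mod $2$ does not change the value of $(-1)^{(\cdot)}$), independence of the $X_j$ lets me factor the expectation:
\begin{equation}
\mathbb{E}[(-1)^Q] = \mathbb{E}\Bigl[\prod_{j=1}^n (-1)^{X_j}\Bigr] = \prod_{j=1}^n \mathbb{E}[(-1)^{X_j}] = \prod_{j=1}^n (1-2q_j).
\end{equation}
This already establishes the ``Equivalently'' claim.

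\textbf{Second step.} From here I would recover the two probabilities by inverting a trivial $2\times 2$ linear system. Because $Q\in\{0,1\}$, we have the normalization $\Pr(Q=0)+\Pr(Q=1)=1$ and, directly from the definition of expectation, $\mathbb{E}[(-1)^Q] = \Pr(Q=0)-\Pr(Q=1)$. Adding and subtracting these two equations gives $\Pr(Q=0) = \tfrac12\bigl(1+\prod_j(1-2q_j)\bigr)$ and $\Pr(Q=1) = \tfrac12\bigl(1-\prod_j(1-2q_j)\bigr)$, which are exactly the claimed formulas.

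\textbf{Alternative.} If one prefers to avoid the character argument, the same result follows by induction on $n$: the base case $n=1$ is immediate, and for the inductive step one writes $Q_n = Q_{n-1}\oplus X_n$, so $\Pr(Q_n=1) = \Pr(Q_{n-1}=1)(1-q_n) + \Pr(Q_{n-1}=0)\,q_n$; substituting the inductive hypothesis for $\Pr(Q_{n-1}=\cdot)$ and simplifying reproduces $\tfrac12[1-\prod_{j=1}^n(1-2q_j)]$ after the factor $(1-2q_n)$ is pulled out. I expect no genuine obstacle here; the only thing to be careful about is the elementary but easy-to-botch algebra in the inductive simplification (or, in the first approach, making explicit why $(-1)^Q = \prod_j (-1)^{X_j}$ despite $Q$ being a sum reduced mod $2$), so I would state that parity-to-sign identity cleanly before using it.
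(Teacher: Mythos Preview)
Your proposal is correct and follows essentially the same argument as the paper: both use the sign/character identity $(-1)^Q=\prod_j(-1)^{X_j}$, factor the expectation by independence to obtain $\mathbb{E}[(-1)^Q]=\prod_j(1-2q_j)$, and then invert $\mathbb{E}[(-1)^Q]=\Pr(Q=0)-\Pr(Q=1)$ together with normalization to read off the two probabilities. Your inductive alternative is a valid second route that the paper does not give, but it is not needed.
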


\begin{proof}
Consider the following product identity
\begin{equation}
    (-1)^Q = (-1)^{X_1 \oplus X_2 \oplus \cdots \oplus X_n} = \prod_i (-1)^{X_i}
\end{equation}
This identity holds because: $(-1)^{0 \oplus 0} = 1 = (-1)^0 \cdot (-1)^0$, 
$1(-1)^{0 \oplus 1} = -1 = (-1)^0 \cdot (-1)^1$
and $1(-1)^{1 \oplus 1} = 1 = (-1)^1 \cdot (-1)^1$

Taking expectations on both sides yields
\begin{equation}
    \mathbb{E}[(-1)^Q] = \mathbb{E}\left[\prod_i (-1)^{X_i}\right]
\end{equation}
and, by expanding the definition of expectation with respect to \(Q\),
\begin{equation}\label{Q_mean1}
    \mathbb{E}[(-1)^Q]= (-1)^0\cdot\Pr(Q=0) + (-1)^1\cdot\Pr(Q=1) = 1-2Pr(Q=1)
\end{equation}
Under the independence assumption,
\begin{equation}
    \mathbb{E}\left[\prod_i (-1)^{X_i}\right] = \prod_i \mathbb{E}[(-1)^{X_i}]
\end{equation}

and each factor evaluates to
\begin{equation}
\mathbb{E}[(-1)^{X_i}] = (-1)^0 \cdot \Pr(X_i = 0) + (-1)^1 \cdot \Pr(X_i = 1) = (1-q_i) -q_i = 1-2q_i
\end{equation}

Thus,
\begin{equation}\label{Q_mean2}
    \mathbb{E}[(-1)^Q] = \prod_i (1-2q_i)
\end{equation}

Equating \eqref{Q_mean1} and \eqref{Q_mean2} and solving for \(\Pr(Q=1)\) gives
\begin{equation}
\Pr(Q=1)=\tfrac{1}{2}\Bigl[1-\prod_i (1-2q_i)\Bigr]\
\end{equation}
The key insight is that the XOR operation in GF(2) corresponds to multiplication in the group \((\{-1,1\}, \cdot)\), which enables a closed-form expression for the parity distribution under independence.
\end{proof}

Therefore, it follows from Proposition~\ref{prop:bernoulli-parity} and from \eqref{eq:r_i-dist} that
\begin{equation}
    \Pr(\mathbf{L}_i \cdot \mathbf{r}=1)= \tfrac{1}{2}\Bigl[1-\prod_{j \in \chi_i} (1-2q_j)\Bigr] = \tfrac{1}{2}\Bigl[1-\prod_{j \in \chi_i} \left(1-2\sigma\!\big((1-2e_j^{\mathrm{true}})\,\hat e_j\big) \right)\Bigr]
\end{equation}

Since $\mathbf{L}_i \cdot \mathbf{r}=1$ corresponds to a logical error detected by the 
$i$-th logical operator (i.e., a bit flip or phase flip on a logical qubit), successful QEC requires minimizing this probability across all logical operators. To achieve this objective, we employ a minimum entropy loss that directly optimizes the negative log-likelihood of the desired outcome, namely that no logical errors occur. This approach concentrates the probability mass on the correct logical state while heavily penalizing configurations that lead to logical failures.

Substituting the expression for $\Pr(\mathbf{L}_i \cdot \mathbf{r}=1)$
 into the entropy objective yields
\begin{align}
    L_{\mathrm{entropy}}
&= -\frac{1}{2k}\sum_{i=1}^{2k} \log\!\Big(1-\Pr(\mathbf{L}_i \cdot \mathbf{r}=1)\Big) \\
&= -\frac{1}{2k}\sum_{i=1}^{2k} \log\!\Big(1+\prod_{j\in\chi_i}\big(1-2\,\sigma((1-2e_j^{\mathrm{true}})\hat e_j)\big)\Big)
\end{align}

\section{Constraint-Projected Nullspace Descent (CPND)}
\label{app: CPND}
The augmented matrix $\widehat{\mathbf{H}}=\bigl[\mathbf{H};\,\mathbf{L}\bigr] \in \{0,1\}^{(m+2k) \times n}$ is constructed by vertically stacking the parity-check matrix $\mathbf{H}\in \{0,1\}^{m \times n}$
(whose rows represent stabilizer generators that, when measured, produce the syndrome) above the logical operator matrix $\mathbf{L}\in \{0,1\}^{2k \times n}$
(whose rows encode logical operators that, when applied to the error vector, determine the logical error class, $k$ X-type + $k$ Z-type logical operators).
Since the stabilizer generators are linearly independent over GF(2) and the logical operators lie in the normalizer but not in the stabilizer subgroup, the rows of $\widehat{\mathbf{H}}$ are linearly independent. Therefore, $\text{rank}(\widehat{\mathbf{H}}) = m + 2k$. 

\begin{figure}[t]
  \centering
  \begin{subfigure}[t]{0.32\textwidth}
    \centering
    \includegraphics[width=\linewidth]{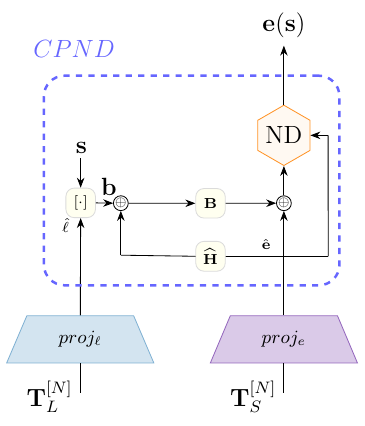}
    \label{fig:CPND}
  \end{subfigure}
  \caption{CPND.}
  \label{fig:CPND}
\end{figure}

 The constraint vector is $\mathbf{b}=[\mathbf{s};\,\mathbf{\ell}] \in \{0,1\}^{(m+2k)}$ where $\mathbf{s}$ is the syndrome and $\mathbf{\ell}$ is the binarized logical class prediction. We define the feasible recovery operator $\mathbf{e}(\mathbf{s})$ set
\begin{equation}
    \mathcal{F} = \left\{ \mathbf{e}(\mathbf{s}) \in \{0,1\}^n  \mid \widehat{\mathbf{H}}\mathbf{e}(\mathbf{s}) =\mathbf{b} \right\}
\end{equation}
We precompute a left inverse $\mathbf{B} \in \{0,1\}^{n \times (m+2k)}$ with $\widehat{\mathbf{H}}\mathbf{B} = \mathbf{I}_{m+2k}$, a proof of existence of such $\mathbf{B}$ is given in Proposition~\ref{prop:HB=I}. Given $\mathbf{e}^{pred}$, we compute the residual $\mathbf{y} = \mathbf{b} \oplus \widehat{\mathbf{H}}\mathbf{e}^{pred}$ and apply the projection $\mathbf{e}' = \mathbf{e}^{pred} \oplus \mathbf{B}\mathbf{y}$. By construction: $\widehat{\mathbf{H}}\mathbf{e}' = \widehat{\mathbf{H}}\mathbf{e}^{pred} \oplus \mathbf{y} = \mathbf{b}$, ensuring $\mathbf{e}' \in \mathcal{F}$.

\begin{proposition}[Existence and constructive computation of a left inverse over GF(2)]\label{prop:HB=I}
Let $A\in \{0,1\}^{r\times n}$ with $r\le n$. There exists $X\in\{0,1\}^{n\times r}$ such that $A X=I_r$ if and only if $\operatorname{rank}(A)=r$.
Moreover, when $A$ has full row rank, the $r$ column-wise systems
\[
A x_i = e_i,\qquad i=1,\dots,r,
\]
(where $e_i$ is the $i$-th standard basis vector of $GF(2)^r$) are all consistent, and any selection of solutions $\{x_i\}_{i=1}^r$ stacked as $X= [x_1\ \cdots\ x_r]$ satisfies $A X=I_r$.
\end{proposition}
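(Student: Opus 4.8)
The plan is to prove the two directions of the ``if and only if'' separately and then observe that the constructive claim follows from the forward direction. The statement is essentially a standard fact of linear algebra over a field, here $\mathrm{GF}(2)$, so no feature specific to the binary field is needed beyond the fact that $\mathrm{GF}(2)$ is a field.

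First I would handle the easy direction: if a left inverse $X$ with $AX = I_r$ exists, then $\operatorname{rank}(A) = r$. Here I would argue that $\operatorname{rank}(AX) \le \min\{\operatorname{rank}(A), \operatorname{rank}(X)\}$, and since $AX = I_r$ has rank $r$, we get $\operatorname{rank}(A) \ge r$; combined with $\operatorname{rank}(A) \le r$ (as $A$ has only $r$ rows), this forces equality. This is two lines.

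Next, the substantive direction: assume $\operatorname{rank}(A) = r$, i.e. $A$ has full row rank. The key point is that the map $x \mapsto Ax$ from $\mathrm{GF}(2)^n$ to $\mathrm{GF}(2)^r$ is surjective, because its image is the column space of $A$, which has dimension $\operatorname{rank}(A) = r$ and hence equals all of $\mathrm{GF}(2)^r$. Therefore each system $A x_i = e_i$ is consistent; pick any solution $x_i$ for each $i = 1, \dots, r$. Stacking these as columns, $X = [\,x_1\ \cdots\ x_r\,]$, the $i$-th column of $AX$ is $A x_i = e_i$, so $AX = I_r$. This simultaneously establishes existence and the constructive recipe stated in the proposition. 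I might also remark that in practice each $x_i$ is obtained by Gaussian elimination over $\mathrm{GF}(2)$, which is how $\mathbf{B}$ is precomputed in the CPND stage.

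I do not anticipate a genuine obstacle here; the only thing to be a little careful about is phrasing the rank/dimension argument cleanly over $\mathrm{GF}(2)$ (column space dimension equals row rank, surjectivity of $x \mapsto Ax$), and making sure the ``any selection of solutions works'' clause is stated as a consequence of the column-wise structure of matrix multiplication rather than requiring a separate argument. If anything, the mild subtlety is that consistency of each $Ax_i = e_i$ must be deduced from full row rank (surjectivity) rather than assumed; once surjectivity is in hand the rest is immediate.
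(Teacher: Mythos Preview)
Your proposal is correct and follows essentially the same route as the paper's proof: both directions use the rank argument for necessity and, for sufficiency, deduce surjectivity of $x\mapsto Ax$ from full row rank, solve each $Ax_i=e_i$, and stack the solutions column-wise. Your version is slightly more verbose (spelling out $\operatorname{rank}(AX)\le\min\{\operatorname{rank}(A),\operatorname{rank}(X)\}$ and the remark on Gaussian elimination), but the structure and key ideas are identical.
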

\begin{proof}
($\Rightarrow$) If $A X=I_r$, then $\operatorname{rank}(A)\ge \operatorname{rank}(I_r)=r$, hence $\operatorname{rank}(A)=r$.
($\Leftarrow$) If $\operatorname{rank}(A)=r$, then the column space $\operatorname{im}(A)\subseteq GF(2)^r$ has dimension $r$ and therefore equals $GF(2)^r$.
Thus each $e_i$ lies in $\operatorname{im}(A)$, so there exists $x_i$ with $A x_i=e_i$.
Stacking these solutions yields $A X=[A x_1\ \cdots\ A x_r]=[e_1\ \cdots\ e_r]=I_r$.
\end{proof}

The projected solution $\mathbf{e}'$ satisfies all constraints but is suboptimal in sense of minimum weight recovery operation, the left inverse $\mathbf{B}$ is constructed purely algebraically and ignores the transformer's learned qubit flip probability. Since the constraint set forms an affine space $\mathbf{e}' \oplus \ker(\widehat{\mathbf{H}})$, we traverse this space to find lower-cost solutions while preserving feasibility.
Let $\mathbf{N} = [\mathbf{v}_1, \ldots, \mathbf{v}_g] \in \{0,1\}^{n \times g}$ span $\ker(\widehat{\mathbf{H}})$ with $g = n - (m+2k)$. We extract qubit flip probability from transformer predictions: $p_q = \sigma(\hat{\mathbf{e}}_q)$ and define weights as log-likelihood ratios $w_q = -\log(p_q/(1-p_q))$. The objective is to minimize weighted Hamming cost: $\text{wt}_w(\mathbf{e}) = \sum_{q=1}^n w_q e_q$ where $\mathbf{e} \in \{0,1\}^n$.

Having $\mathbf{e}'$, we want to descend in the nullspace to find a minimal weight solution. First we convert the binary solution $\mathbf{e}'$ to signs $\mathbf{\sigma}' = (1-2\mathbf{e}') \in \{+1,-1\}^n$, where $\sigma'_q = +1$ if $e'_q = 0$ and $\sigma'_q = -1$ if $e'_q = 1 \quad \text{for }q=1\ldots,n $. The main loop performs a single pass over the $g$ nullspace generators $\{\mathbf{v}_j\}_{j=1}^g$. For each generator $\mathbf{v}_j$, we identify its support $\chi_j = \{q : \mathbf{v}_{j,q} = 1\}$ and compute the cost change $\Delta_j = \sum_{q \in \chi_j} w_q \sigma'_q$. If $\Delta_j < 0$, we accept the move: update $\mathbf{e}' \leftarrow \mathbf{e}' \oplus \mathbf{v}_j$ and flip signs $\sigma'_q \leftarrow -\sigma'_q$ for all $q \in \chi_j$. 

Since $\widehat{\mathbf{H}}\mathbf{v}_j = \mathbf{0} \quad \text{for }j=1\ldots g$, the nullspace descent preserves constraint satisfaction while achieving monotonic cost reduction, terminating at a locally optimal solution. Algorithm~\ref{alg:cpnd} presents the complete method.

\begin{algorithm}[H]
\caption{Constraint-Projected Nullspace Descent (CPND)}
\label{alg:cpnd}
\begin{algorithmic}[1]

\Require $\widehat{\mathbf{H}}$, $\mathbf{B}$, $\mathbf{N}$ ; $\mathbf{b}$; $\mathbf{e}^{pred}$; weights $w\in\mathbb{R}^{n}$

\Ensure $\mathbf{e}(\mathbf{s})\in\{0,1\}^n$ with $\widehat{\mathbf{H}}\mathbf{e}(\mathbf{s}) =\mathbf{b}$, and reduced weighted cost $\textsf{wt}_w(\mathbf{e}(\mathbf{s}))=\sum_q w_q e_q(\mathbf{s})$

\State $\mathbf{y} \gets \mathbf{b} \oplus\ \widehat{\mathbf{H}}\mathbf{e}^{pred}$ 

\State $\mathbf{e'} \gets \mathbf{e}^{pred}\ \oplus\ \mathbf{B}\,\mathbf{y}$ 

\State $\mathbf{\sigma}' \gets (1-2\mathbf{e}')\in\{+1,-1\}^{n}$ 

\For{$j=1$ to $g$} 
  \State $\chi_j \gets \{\,q:\ v_{j,q}=1\,\}$ 
  \State $\displaystyle \Delta_j \gets \sum_{q\in \chi_j} w_q\,\sigma'_q$ 
  \If{$\Delta_j < 0$}
     \State $\mathbf{e}' \gets \mathbf{e}' \ \oplus\ \mathbf{v}_j$ 
     \State $\sigma'_q \gets -\sigma'_q$ for all $q\in \chi_j$ 
  \EndIf
\EndFor
\State \textbf{return} $\mathbf{e}(\mathbf{s}) \gets \mathbf{e}'$
\Statex
\end{algorithmic}
\end{algorithm}

Since stabilizer generators commute, yielding $\mathbf{H} \mathbf{H}^\mathbf{T} = \mathbf{0}$
 over GF(2), the columns of $\mathbf{H}^\mathbf{T}$
 span a subspace of $\ker(\mathbf{H})$, providing an approximated basis for nullspace descent.

\begin{figure}[h]
    \centering
    \includegraphics[width=0.35\linewidth]{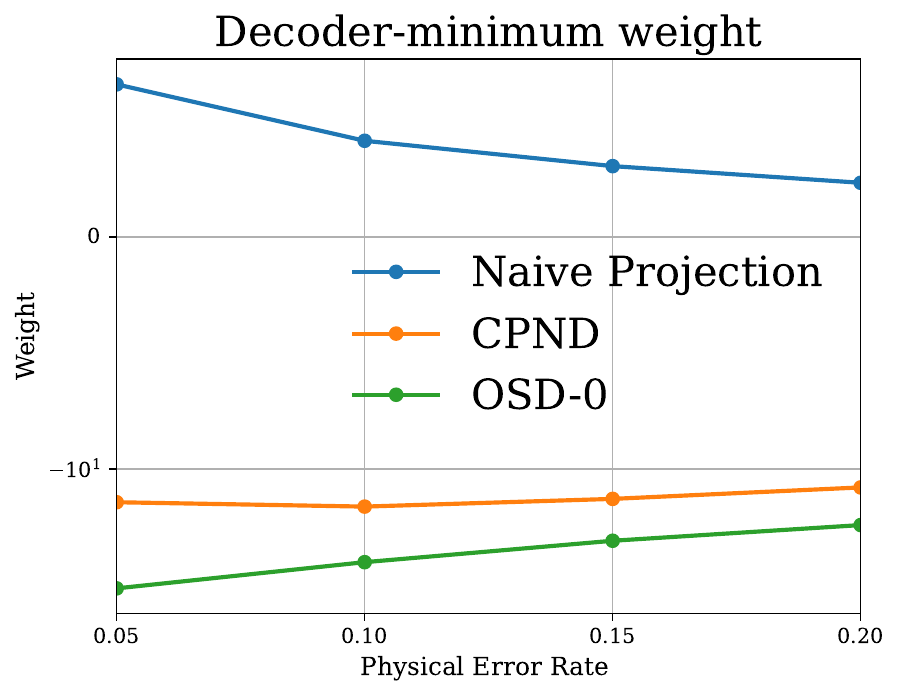}
  \caption{Comparison of recovery operator weights.}
  \label{fig:weights}
\end{figure}

Figure \ref{fig:weights} presents decoder recovery operator weights as a function of physical error rate $p \in \{0.05, 0.10, 0.15, 0.20\}$
 for the toric code under independent noise ($L_{\text{code}}=4$). We compare three post-processing approaches that operate on SAQ-Decoder representations: the projection baseline ($\mathbf{e}' = \mathbf{e}^{\text{pred}} \oplus \mathbf{B}\mathbf{y}$), CPND and OSD-0 \citep{fossorier1995soft, roffe2020decoding}. OSD-0 is a post-processing algorithm that achieves minimum weight solutions while maintaining syndrome consistency, but requires matrix inversion operations that scale cubically in complexity, compared to our method's linear complexity. Across all error rates, the methods exhibit consistent performance ordering: OSD-0 achieves the lowest weights, CPND performs comparably to OSD-0, while the projection baseline consistently yields the highest weights. The performance gaps increase monotonically with 
$p$, demonstrating that CPND consistently approaches minimum-weight solutions. These results highlight that structure-aware post-processing methods (CPND and OSD-0) achieve uniformly superior weight minimization compared to naive projection approaches.

\section{Surface codes}
\label{app:code}
We evaluate our method on surface codes due to their prominence in fault-tolerant quantum computing, specifically toric codes \citep{kitaev1997quantum} which encode $k=2$ logical qubits in $n=2L_{\text{code}}^2$ physical qubits, and rotated surface codes \citep{bombin2007optimal} which encode $k=1$ logical qubit in $n=L_{\text{code}}^2$ physical qubits.
Toric codes utilize periodic boundary conditions with qubits on lattice edges, while rotated surface codes employ a lattice geometry with qubits on vertices. The stabilizer generators are organized into two distinct groups based on lattice geometry, with different implementations for each code family. For toric codes, vertex stabilizers are constructed as products of Pauli-X operators acting on all qubits adjacent to each lattice vertex, while plaquette stabilizers consist of products of Pauli-Z operators acting on qubits surrounding each lattice face, yielding a total of $m = 2L_{\text{code}}^2-2$ stabilizer generators ($L_{\text{code}}^2-1$ vertex stabilizers and $L_{\text{code}}^2-1$ plaquette stabilizers corresponding to the vertices and faces of the $L_{\text{code}} \times L_{\text{code}}$ toric lattice). Rotated surface codes employ a fundamentally different geometry where all stabilizer generators are placed on lattice faces rather than being split between vertices and plaquettes. These face-based stabilizers come in two alternating types: X-type stabilizers (tensor products of Pauli-X operators on qubits surrounding a face) and Z-type stabilizers (tensor products of Pauli-Z operators on qubits surrounding a face). The lattice arrangement creates a natural checkerboard pattern where X- and Z-type stabilizers alternate, ensuring that every physical qubit—positioned on a vertex of the rotated lattice—participates in exactly two X-type and two Z-type stabilizer measurements. For an $L_{\text{code}} \times L_{\text{code}}$ rotated surface code, this checkerboard arrangement yields $m = 2L_{\text{code}}^2 - 1$ independent stabilizer generators, with $\frac{L_{\text{code}}^2-1}{2}$ generators of each type, when $L$ is always odd. Figure \ref{fig:codes} presents both code geometries.

Two standard noise models are examined: \emph{independent noise} and \emph{depolarizing noise}. Under the independent (uncorrelated) noise model, bit-flip ($X$) and phase-flip ($Z$) errors occur independently with equal error probability, allowing the decoding of $X$ and $Z$ syndromes to be treated separately. In contrast, the depolarizing noise model assigns equal probability $p/3$ to the non-identity Pauli operators, i.e.,
$\Pr(X)=\Pr(Z)=\Pr(Y)=\tfrac{p}{3}, \Pr(I)=1-p,$
where $Y = iXZ$.

\begin{figure}[t]
  \centering
  \begin{subfigure}[t]{0.5\textwidth}
    \centering
    \includegraphics[width=\linewidth]{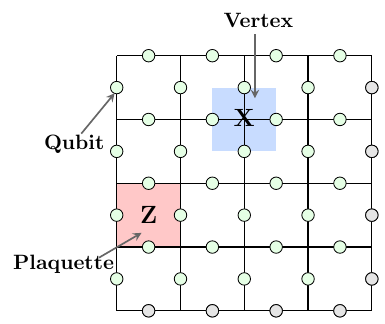}
    \label{fig:toric}
  \end{subfigure}\hfill%
  \begin{subfigure}[t]{0.4\textwidth}
    \centering
    \includegraphics[width=\linewidth]{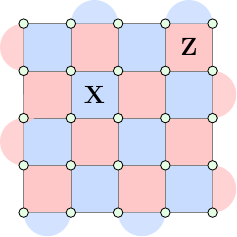}
    \label{fig:rot}
  \end{subfigure}
  \caption{\textbf{Surface codes}: (left) Toric code with $L_{\text{code}} = 4$, where gray qubits represent boundary conditions with periodic boundary conditions (top row connects to bottom row, left column connects to right column). (right) Rotated surface code with $L_{\text{code}} = 5$. Data qubits adjacent to red faces correspond to Z-type stabilizer generators, while those adjacent to blue faces correspond to X-type stabilizer generators.}
  \label{fig:codes}
\end{figure}

\section{Training Details.}\label{app:training}
 Our training methodology randomly samples noise within the physical error rate testing range to ensure robust generalization across different noise regimes. The model architecture employs $N=6-8$ transformer layers with shared parameters across dual token streams and an embedding dimension of $d=128$ and $h=16$ attention heads.
The multi-component loss function uses weighting parameters $\lambda_{\text{LP}} = 0.2$, $\lambda_{\text{LC}} = 1.0$, and $\lambda_{\text{Ent}} = 1.0$ for informed logical priors loss, logical class prediction loss, and minimum entropy loss, respectively.

We optimize using the Adam optimizer \citep{kingma2014adam} with mini-batches of $128-512$ samples over $200-600$ epochs, processing $5,000-20,000$ mini-batches per epoch for a total of approximately $2.56 \times 10^6$
 error samples per training run. The initial learning rate is set between $3 \times 10^{-4}$ and $1 \times 10^{-4}$, with cosine annealing decay to $1 \times 10^{-6}$ by the end of training \citep{loshchilov2016sgdr}. Detailed experimental configurations are presented in Table \ref{tab:experimental_setup}, with all experiments conducted on a 48GB NVIDIA L40S GPU.

 We initialized our development from the QECCT implementation. While longer training and alternative configurations may yield further improvements, time and computational constraints limited our exploration of the hyperparameter space. We use the toric code implementation from \cite{krastanov2017deep}, while the rotated surface code is implemented from scratch. Our source code is provided in the Supplementary Materials.

\begin{table}[ht]
\centering
\caption{Experimental configuration across different code distances and noise models.}
\label{tab:experimental_setup}
\resizebox{\textwidth}{!}{
\begin{tabular}{c|c|c|c|c|c|c|c|c|c}
\toprule
\textbf{Code} & \textbf{Code} & \textbf{Noise} & \textbf{Learning} & \textbf{Epochs} & \textbf{Batch} & \textbf{SAQ-Decoder} & \textbf{Model} & \textbf{Physical} & \textbf{Epoch Time} \\
\textbf{Distance} & \textbf{Type} & \textbf{Type} & \textbf{Rate} & & \textbf{Size} & \textbf{Layers} & \textbf{Params} & \textbf{Error Rate} & \textbf{[sec]} \\
\midrule
\multirow{2}{*}{\textbf{4}} & \multirow{2}{*}{\textbf{Toric}} & Independent & 2.5e-4 & 200 & 512 & 6 & 1.2M & 0.01-0.20 & 192 \\
\cline{3-10}
& & Depolarizing & 2e-4 & 200 & 128 & 6 & 1.2M & 0.05-0.20 & 283 \\
\midrule
\multirow{2}{*}{\textbf{5}} & \multirow{2}{*}{\textbf{Rotated Surface}} & Independent & 1e-4 & 200 & 128 & 6 & 1.2M & 0.01-0.20 & 153 \\
\cline{3-10}
& & Depolarizing & 2e-4 & 300 & 128 & 6 & 1.2M & 0.05-0.20 & 236 \\
\midrule
\multirow{2}{*}{\textbf{6}} & \multirow{2}{*}{\textbf{Toric}} & Independent & 2.5e-4 & 400 & 512 & 6 & 1.2M & 0.01-0.20 & 184 \\
\cline{3-10}
& & Depolarizing & 3e-4 & 400 & 512 & 6 & 1.23M & 0.05-0.20 & 478 \\
\midrule
\multirow{2}{*}{\textbf{7}} & \multirow{2}{*}{\textbf{Rotated Surface}} & Independent & 1e-4 & 500 & 512 & 6 & 1.2M & 0.01-0.20 & 245 \\
\cline{3-10}
& & Depolarizing & 3e-4 & 500 & 512 & 6 & 1.21M & 0.05-0.20 & 584 \\
\midrule
\multirow{2}{*}{\textbf{8}} & \multirow{2}{*}{\textbf{Toric}} & Independent & 1.5e-4 & 600 & 128 & 6 & 1.22M & 0.01-0.20 & 386 \\
\cline{3-10}
& & Depolarizing & 2e-4 & 600 & 128 & 8 & 1.7M & 0.05-0.20 & 2464 \\
\midrule
\multirow{2}{*}{\textbf{9}} & \multirow{2}{*}{\textbf{Rotated Surface}} & Independent & 1e-4 & 600 & 512 & 6 & 1.2M & 0.01-0.20 & 285 \\
\cline{3-10}
& & Depolarizing & 1e-4 & 600 & 128 & 8 & 1.63M & 0.05-0.20 & 641 \\
\midrule
\multirow{2}{*}{\textbf{10}} & \multirow{2}{*}{\textbf{Toric}} & Independent & 3e-4 & 600 & 512 & 6 & 1.26M & 0.01-0.20 & 683 \\
\cline{3-10}
& & Depolarizing & 3e-4 & 600 & 512 & 8 & 1.85M & 0.05-0.20 & 3090 \\
\bottomrule
\end{tabular}
}
\end{table}
\clearpage
\newpage

\section{Comparative Threshold Analysis}
\label{appendix:threshold}

This appendix provides a detailed comparison of the SAQ-Decoder's performance threshold under depolarizing noise against recent  neural decoders and high-performance classical decoders. Our numerical comparison in Table \ref{tab:comparative_thresholds}, the results confirm that the SAQ-Decoder's achieved threshold of $\mathbf{18.6\%}$ approaches the theoretical Maximum Likelihood (ML) bound of $18.9\%$ \citep{bombin2012strong} and represents the highest value reported for a practical, scalable decoder.

\begin{table}[h]
\centering
\caption{Comparative Thresholds for Surface Codes (Depolarizing Noise)}
\label{tab:comparative_thresholds}
\begin{tabular}{l c}
\toprule
\textbf{Decoder / Paradigm} & \textbf{Reported Threshold ($\downarrow$)} \\
\midrule
SAQ-Decoder (Our Work) & \textbf{18.6\%} \\
QECCT \citep{choukroun2024deep} & 17.8\% \\
Astra (2024) \citep{maan2025machine} & $\sim$17.0\% \\
SU-NetQD (2025) \citep{zhang2025self} & 16.3\% \\
ML+UF (2022) \citep{meinerz2022scalable} & 16.2\% \\
MWPM (Classical Baseline) \citep{wang2009threshold} & 16.0\% \\
BP-OSD (Classical Baseline) \citep{roffe2020decoding} & $\sim$16.0\% \\
UIUF (2024) \citep{lin2025union} & 15.6\% \\
\bottomrule
\end{tabular}
\end{table}

\vspace{0.5em}
\footnotesize
Note: All percentages reflect toric or rotated surface code performance under standard depolarizing noise. 
\normalsize

\section{Detailed Computational Metrics}
\label{appendix:complexity}

This appendix provides the detailed numerical comparisons of computational complexity and efficiency metrics for the SAQ-Decoder against the QECCT baseline. Metrics include the total number of floating-point operations (FLOPs), the total number of trainable parameters (Params), and the inference time per sample (Time).  Our numerical comparison in Table \ref{tab:full_computational_metrics} consistently demonstrates the superior efficiency and scalability of the SAQ-Decoder across all tested code distances ($L$) and noise models (independent 'ind' and depolarizing 'dep'). 

\begin{table}[h]
\centering
\caption{Comparison of SAQ and QECCT Decoder Computational Metrics}
\label{tab:full_computational_metrics}
\small 
\begin{tabular}{c c c l c c c}
\toprule
\textbf{Code Type} & $\mathbf{L}$ & \textbf{Noise} & \textbf{Decoder} & \textbf{FLOPs [M $\downarrow$]} & \textbf{Params [M $\downarrow$]} & \textbf{Time [sec $\downarrow$]} \\
\midrule
{\textbf{Toric}} & {4} & {ind} & SAQ & \textbf{27.48} & \textbf{1.20} & \textbf{2.13e-05} \\
& & & QECCT & 57.01 & 1.23 & 2.61e-05 \\
\cmidrule{4-7}
& & {dep} & SAQ & \textbf{61.51} & \textbf{1.20} & \textbf{2.59e-05} \\
& & & QECCT & 114.09 & 1.33 & 8.13e-05 \\
\cmidrule{2-7}
& {6} & {ind} & SAQ & \textbf{55.16} & \textbf{1.20} & \textbf{2.27e-05} \\
& & & QECCT & 128.37 & 1.37 & 1.01e-04 \\
\cmidrule{4-7}
& &{dep} & SAQ & \textbf{116.88} & \textbf{1.23} & \textbf{6.39e-05} \\
& & & QECCT & 257.08 & 1.90 & 3.41e-04 \\
\cmidrule{2-7}
& {8} & {ind} & SAQ & \textbf{93.92} & \textbf{1.22} & \textbf{5.01e-05} \\
& & & QECCT & 228.45 & 1.75 & 2.73e-04 \\
\cmidrule{4-7}
& &{dep} & SAQ & \textbf{259.15} & \textbf{1.70} & \textbf{2.15e-04} \\
& & & QECCT & 458.01 & 3.42 & 9.72e-04 \\
\cmidrule{2-7}
& {10} & {ind} & SAQ & \textbf{143.76} & \textbf{1.26} & \textbf{9.72e-05} \\
& & & QECCT & 357.44 & 2.55 & 6.28e-04 \\
\cmidrule{4-7}
& & {dep} & SAQ & \textbf{392.09} & \textbf{1.85} & \textbf{4.50e-04} \\
& & & QECCT & 717.61 & 6.64 & 2.33e-03 \\
\midrule
{\textbf{Rot. Surf.}} & {5} & \multirow{2}{*}{ind} & SAQ & \textbf{19.97} & \textbf{1.20} & \textbf{6.63e-06} \\
& & & QECCT & 43.94 & 1.21 & 1.78e-05 \\
\cmidrule{4-7}
& & {dep} & SAQ & \textbf{38.55} & \textbf{1.20} & \textbf{1.50e-05} \\
& & & QECCT & 87.92 & 1.28 & 5.49e-05 \\
\cmidrule{2-7}
&{7} & {ind} & SAQ & \textbf{36.57} & \textbf{1.20} & \textbf{1.46e-05} \\
& & & QECCT & 86.73 & 1.27 & 5.37e-05 \\
\cmidrule{4-7}
& & {dep} & SAQ & \textbf{71.77} & \textbf{1.21} & \textbf{3.23e-05} \\
& & & QECCT & 173.62 & 1.52 & 1.75e-04 \\
\cmidrule{2-7}
& {9} & {ind} & SAQ & \textbf{58.72} & \textbf{1.20} & \textbf{2.35e-05} \\
& & & QECCT & 143.84 & 1.41 & 1.23e-04 \\
\cmidrule{4-7}
& & {dep} & SAQ & \textbf{154.71} & \textbf{1.63} & \textbf{9.15e-05} \\
& & & QECCT & 288.13 & 2.08 & 4.16e-04 \\
\cmidrule{2-7}
& {11} & {ind} & SAQ & \textbf{86.40} & \textbf{1.22} & \textbf{4.49e-05} \\
& & & QECCT & 215.33 & 1.69 & 2.51e-04 \\
\cmidrule{4-7}
& & {dep} & SAQ & \textbf{228.54} & \textbf{1.68} & \textbf{1.73e-04} \\
& & & QECCT & 431.66 & 3.18 & 8.87e-04 \\
\bottomrule
\multicolumn{7}{l}{$L$ denotes code distance. Lower values are better for all metrics.}
\end{tabular}
\end{table}


\section{The Use of Large Language Models (LLMs)}
LLMs were employed to assist in several aspects of this research and manuscript preparation. For the literature review, LLMs aided in the identification and sourcing of relevant related works to ensure comprehensive coverage of the field. During the research process, LLMs were consulted for research ideation, though these explorations did not yield beneficial outcomes that influenced the final work. In manuscript preparation, LLMs assisted with improving grammar, enhancing textual transitions between sections, and refining the exposition of technical concepts for better clarity and readability. For software development, LLMs provided assistance in code writing, GPU acceleration optimizations, and debugging code issues. Despite these auxiliary uses, all core research contributions, experimental design, theoretical insights, and scientific conclusions presented in this work are entirely the product of the authors' original research and analysis.

\end{document}